\documentclass{article}

\usepackage[T1]{fontenc}
\usepackage{graphicx}

\usepackage[utf8]{inputenc}
\usepackage{amsmath,amssymb,amsthm}
\usepackage{url,hyperref}
\usepackage{fancyvrb}
\usepackage{algorithm,algorithmicx,algpseudocode}

\usepackage[margin=1in]{geometry}

\tolerance=1000
\hyphenpenalty=100000

%----------------------------
\newcommand{\eps}{\epsilon}

\newcommand{\etal}{{\it{et al.}}}

\newcommand{\E}{\mathcal{E}}

\newcommand{\EE}{\mathbb{E}}
\newcommand{\var}{\mathrm{Var}}

\newcommand{\C}{\mathcal{C}}

\newcommand{\ti}[1]{\tilde{#1}}
\newcommand{\wt}[1]{\widetilde{#1}}

\newtheorem{claim}{Claim}[section]

\newtheorem{thm}{Theorem}[section]
\newtheorem{lem}{Lemma}[section]
\newtheorem{fact}{Fact}
\newtheorem{defn}{Definition}[section]

\allowdisplaybreaks

\usepackage{color}

\urlstyle{rm}

\begin{document}

\title{Improved Sublinear-time Moment Estimation using Weighted Sampling} %TODO Please add

%\titlerunning{Dummy short title} %TODO optional, please use if title is longer than one line

\author{
Anup Bhattacharya\thanks{Authors are ordered alphabetically}\\ NISER, Bhubaneswar, India \\ anup@niser.ac.in \and
Pinki Pradhan \\ NISER, Bhubaneswar, India \\ pinki.pradhan@niser.ac.in}

\maketitle

% !TEX root = paper.tex
\begin{abstract}

In this work we study the {\it moment estimation} problem using weighted sampling. Given sample access to a set $A$ with $n$ weighted elements, and a parameter $t>0$, we estimate the $t$-th moment of $A$ given as $S_t=\sum_{a\in A} w(a)^t$. For $t=1$, this is the {\it sum estimation} problem for which sublinear time algorithms are known. The moment estimation problem along with a number of its variants have been extensively studied in streaming, sublinear and distributed communication models. Despite being well studied, we don't yet have a complete understanding of the sample complexity of the moment estimation problem in the sublinear model and in this work, we make progress on this front. On the algorithmic side, our upper bounds match the known upper bounds for the problem for $t>1$. To the best of our knowledge, no sublinear algorithms were known for this problem for $0<t<1$. We design a sublinear algorithm for this problem for $t>1/2$ and show that no sublinear algorithms exist for $t\leq 1/2$. We prove a $\Omega(\frac{n^{1-1/t}\ln 1/\delta}{\eps^2})$ lower bound for moment estimation for $t>1$, and show optimal sample complexity bound $\Theta(\frac{n^{1-1/t}\ln 1/\delta}{\eps^2})$ for moment estimation for $t\geq 2$. Hence, we obtain a complete understanding of the sample complexity for moment estimation using proportional sampling for $t\geq 2$. We also study the moment estimation problem in the beyond worst-case analysis paradigm and identify a new {\it moment-density} parameter of the input that characterizes the sample complexity of the problem using proportional sampling and derive tight sample complexity bounds with respect to that parameter. We also study the moment estimation problem in the {\it hybrid sampling} framework in which one is given additional access to a uniform sampling oracle. We show access to a hybrid sampling framework does not provide any additional gain for this problem over a proportional sampling oracle in the worst case. 

%\keywords{Sublinear algorithms \and moment estimation \and sample complexity \and lower bounds \and beyond-worst case analysis}
\end{abstract}

% !TEX root = paper.tex
\section{Introduction}

% Problem of interest: motivate

Let $A$ be any weighted set of $n$ elements. Each element in $A$ is associated with a weight using the function $w:A\rightarrow [0,\infty)$. Given an input parameter $t>0$, the problem of estimating the $t$-th moment of $A$, expressed as $S_t=\sum_{a\in A} w(a)^t$, is called the {\it moment estimation} problem\footnote{The $t$-th moment of $A$ is also defined as $S_t=1/n \cdot \sum_{a\in A} w(a)^t$. We follow $S_t=\sum_{a\in A} w(a)^t$ for simplicity of calculations \cite{ERS2019}.}. In this work we study the moment estimation problem in a model in which we don't have direct access to the weights of the elements in $A$, instead we get indirect access to the weights using samples from an oracle. Our objective is to obtain a good approximation of the $t$th moment $S_t$ of $A$ while making a small number of queries to the oracle. 

% motivate sublinear

Estimation tasks on very large datasets are often performed using samples from the dataset. One common objective in these settings is to compute an approximate answer using only a small number of samples. Sublinear algorithms are algorithms that access only a tiny portion of the data (sublinear in the input size) and compute an approximate answer. For many problems of interest, uniform sampling-based algorithms require a lot of samples, and hence might not be very useful in designing sublinear algorithms where as weighted sampling-based algorithms often give better performance guarantees. This is the case, for example, for the sum estimation problem. Uniform sampling-based approaches require $\Omega(n)$ samples for the sum estimation problem in the worst-case where as weighted sampling-based algorithms require only $O(\sqrt{n}/\eps)$ samples \cite{MPX2007,BT2022}. This, however, uses the assumption of access to a stronger query oracle that returns samples according to a weighted distribution. In applications where access to such a stronger query oracle is available, one might use the above algorithms with better performance guarantees. One such use case is the parameter estimation problems on graphs where one is given access to a random edge sampling oracle. Details about using weighted sampling for graph parameter estimation problems and other application areas can be found in \cite{MPX2007,ABGPRY2018,BT2022}.

% problem definition
In this work we study the sample complexity of estimators for the moment estimation problem assuming access to a proportional sampling oracle on the set $A$. Proportional sampling on set $A$ returns an element $a\in A$ with probability proportional to $w(a)$. Let $W=\sum_{a\in A} w(a)$. Using proportional sampling an element $a\in A$ is chosen with probability $w(a)/W$. Next, we formally define the moment estimation problem that we study in this work.

\begin{defn}$((\eps,\delta)-\text{Moment Estimation})$ Given sample access to a set $A$ of $n$ weighted elements and input parameters $\eps,\delta\in (0,1)$, $t>0$, design an algorithm $ALG$ that returns $ALG(A,t,\eps,\delta)$ such that $$\Pr[(1-\eps)S_t\leq ALG(A,t,\eps,\delta)\leq (1+\eps)S_t]\geq 1-\delta$$ \end{defn}

% known results, different settings

The moment estimation problem is one of the fundamental problems with a number of variants that are studied in the literature of sublinear algorithms. For $t=1$, this is the {\it sum estimation} problem. Motwani~\etal~\cite{MPX2007} initiated the study of designing sublinear algorithms for the sum estimation problem assuming access to a proportional sampling oracle. Recently, Beretta and T{\v{e}}tek \cite{BT2022} have improved the sample complexity of the sum estimation problem to $O(\sqrt{n}/\eps)$ using proportional sampling. Estimating frequency moments in streams is one of the most well-studied problems in the streaming literature \cite{AMS1999}. Moment estimation is also well studied in the distributed communication models \cite{JW2023}. This problem can also be thought of as the vector norm estimation problem where the weights correspond to the entries of a vector. When the set of elements corresponds to the set of vertices in a graph and the weights correspond to the degrees of the vertices in the graph, this problem is known as the {\it degree distribution moment estimation} problem. For $t=1$, this is the edge estimation problem in graphs for which Feige \cite{F2006} and Goldreich and Ron \cite{GR2008} designed sublinear algorithms. For the related degree distribution moment estimation problem in graphs, the authors in \cite{ERS2018,ERS2019,GRS2011} designed sublinear time algorithms in the sparse graph model in which one is allowed to make uniform vertex, degree and neighbour queries. Aliakbarpour~\etal~\cite{ABGPRY2018} designed a sublinear algorithm for estimating $\sum_{a\in A} {w(a)\choose t}$ using $O(\frac{n^{1-1/t}\ln 1/\delta}{\eps^2})$ queries in the sparse graph model with additional access to a random edge oracle, where $w(a)$ denotes the degree of vertex $a$. Various applications of the moment estimation problem including its variants are discussed in \cite{ERS2018,ABGPRY2018,BT2022}.

A number of recent works \cite{AN2022,TT2022,G2017} highlighted the importance of designing sublinear algorithms with optimal dependence not only on $n$ but also on $\eps$ and $\delta$. Beretta and T{\v{e}}tek have improved the sample complexity of the sum estimation problem from $\tilde{O}(\sqrt{n}/\eps^{7/2})$ proportional samples in Motwani~\etal~\cite{MPX2007} to $\Theta(\sqrt{n}/\eps)$ proportional samples \cite{BT2022}. Assadi and Nguyen designed a sublinear algorithm to estimate $h$-index with optimal sample complexity dependence on all input parameters \cite{AN2022}. Despite a lot of work on sublinear algorithms for the moment estimation problem, we don't have a complete understanding of the sample complexity of the moment estimation problem using weighted sampling. One of the primary motivations of this work is to make progress in our understanding of the moment estimation problem in sublinear models. Next, we discuss the main results of this work.

%There has been recent interest in obtaining a tight bound considering all input parameters. We have shown that our bound is tight with respect to all the input parameters $n,\eps,\delta$. 

\subsection{Our contributions}

%\begin{thm}(Upper bound result) There exists an algorithm $ALG$ that given proportional sampling access to the weights of the elements of a set $A$ and a parameter $t\geq 2$, $\eps,\delta\in (0,1)$, provides an $(\eps,\delta)$-estimate of $S_t$ using $O(\frac{n^{1-1/t}\log 1/\delta}{\eps^2})$ samples. \end{thm}

%We note that our upper bound holds for any $t>1$. However, since we use the sum estimation algorithm of Beretta and T{\v{e}}tek \cite{BT2022} as a subroutine in out algorithm, this step additionally requires $O(\sqrt{n}/{\eps})$ proportional samples. For any $t>1$, our algorithm has sample complexity $O(\frac{\sqrt{n}\log 1/\delta}{\eps}+\frac{n^{1-1/t}\log 1/\delta}{\eps^2})$. For $t\geq 2$, this becomes $O(\frac{n^{1-1/t}\log 1/\delta}{\eps^2})$.
\subsubsection{Moment estimation using proportional sampling} 

Our results for the $(\eps,\delta)$-moment estimation problem using proportional sampling for $t>1$ are as follows. 

\begin{thm} There exists an algorithm $ALG$ that given proportional sampling access to the weights of the elements of a set $A$ and parameters $t>1$, $\eps,\delta\in (0,1)$, provides an $(\eps,\delta)$-estimate of $S_t$ using $O((\frac{\sqrt{n}}{\eps}+\frac{n^{1-1/t}}{\eps^2}) \ln\frac{1}{\delta})$ samples. \end{thm}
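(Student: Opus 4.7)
My plan is to design a two-stage estimator. First, invoke the sum-estimation algorithm of Beretta and T\v{e}tek on $A$ to obtain, with $O(\sqrt{n}/\eps \cdot \ln(1/\delta))$ proportional samples, an estimate $\widehat{W}$ satisfying $(1-\eps/3)W \le \widehat{W} \le (1+\eps/3)W$ with probability $\ge 1-\delta/2$. This accounts for the $\sqrt{n}/\eps$ term in the sample complexity. The core observation for the second stage is that if $a$ is drawn with probability $w(a)/W$, then $X = W \cdot w(a)^{t-1}$ is an unbiased estimator of $S_t$, since
\[
\EE[X] = \sum_{a\in A} \frac{w(a)}{W}\cdot W \cdot w(a)^{t-1} = S_t.
\]
So the algorithm draws $k$ fresh proportional samples $a_1,\ldots,a_k$, forms $Y = \widehat{W}\cdot \frac{1}{k}\sum_i w(a_i)^{t-1}$, and outputs the median of $O(\ln(1/\delta))$ such repetitions.

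The main technical step is bounding $\var(X)/S_t^2$. I compute $\EE[X^2] = W\cdot S_{2t-1}$, so it suffices to show $W\cdot S_{2t-1} \le n^{1-1/t}\, S_t^2$. I plan to combine two power-mean-type inequalities (valid for $t>1$): by the power mean inequality applied to the weights, $W \le n^{1-1/t}\, S_t^{1/t}$; and since $\max_a w(a) \le S_t^{1/t}$, one obtains $S_{2t-1} = \sum_a w(a)^{t-1}\cdot w(a)^{t} \le S_t^{(t-1)/t}\cdot S_t = S_t^{2-1/t}$. Multiplying these,
\[
W\cdot S_{2t-1} \;\le\; n^{1-1/t}\, S_t^{1/t}\cdot S_t^{2-1/t} \;=\; n^{1-1/t}\, S_t^2,
\]
giving the desired variance bound.

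With this in hand, Chebyshev's inequality guarantees that $k = \Theta(n^{1-1/t}/\eps^2)$ proportional samples suffice for the sample mean $\frac{1}{k}\sum_i w(a_i)^{t-1}$ to lie within $(1\pm\eps/3)$ of $S_t/W$ with probability at least $2/3$, so multiplying by $\widehat{W}$ (itself within $(1\pm\eps/3)$ of $W$) yields a $(1\pm\eps)$ estimate of $S_t$ with the same constant probability. The standard median-of-means boosting then converts this to success probability $1-\delta/2$ at the cost of a $\ln(1/\delta)$ factor, and a union bound with the first stage gives overall failure probability $\le \delta$.

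The step I expect to be the main obstacle is the variance bound: the naive approach of bounding $\EE[X^2]$ directly in terms of $S_t^2$ loses too much, and one must exploit the right two power-mean inequalities simultaneously, one relating $W$ to $S_t$ and another relating $S_{2t-1}$ to $S_t$ via the maximum weight. The rest of the argument (the split into estimating $W$ separately, the use of Chebyshev, and the median trick) is standard once the variance is controlled.
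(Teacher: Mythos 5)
Your proposal is correct and follows essentially the same route as the paper: estimate $W$ via the Beretta--T\v{e}tek sum estimator, use the estimator $\widehat{W}\cdot w(a)^{t-1}$ (the paper writes it as $w(a)^t/\tilde p_a$ with $\tilde p_a = w(a)/\widetilde W$, which is the same quantity), bound $\EE[X^2]=W\cdot S_{2t-1}\le n^{1-1/t}S_t^2$, and finish with Chebyshev plus the median trick. Your derivation of $S_{2t-1}\le S_t^{2-1/t}$ via $\max_a w(a)\le S_t^{1/t}$ is just an elementary restatement of the paper's norm-monotonicity step $\|w\|_{2t-1}\le \|w\|_t$, and your bookkeeping of the $\widehat{W}$ error as a separate multiplicative factor (rather than folding it into a biased estimator) is an equivalent presentation.
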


For $t\geq 2$, the sample complexity of our algorithm is $O(\frac{n^{1-1/t}\ln\frac{1}{\delta}}{\eps^2})$. Our next result shows that this bound is tight.

\begin{thm} For any $\eps,\delta\in (0,1)$ and $t>1$, any randomized algorithm that computes an $(\eps,\delta)$-estimate of $S_t$ requires $\Omega(\frac{n^{1-1/t}\ln\frac{1}{\delta}}{\eps^2})$ proportional samples. \end{thm}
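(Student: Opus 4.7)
The plan is to establish the lower bound via a standard two-hypothesis KL-divergence argument. I will construct two instances $I_1, I_2$ on $n$ weighted elements such that (i) their $t$-th moments differ by a factor of at least $1 + 2\eps$, so any $(\eps,\delta)$-approximation algorithm must decide between them with error probability at most $2\delta$ on each side, and (ii) the induced per-sample proportional distributions $\mu_1, \mu_2$ satisfy $D(\mu_1 \| \mu_2) = O(\eps^2/n^{1-1/t})$.

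For the construction, the natural candidate is to take $n-1$ unit-weight elements in both instances and a single ``heavy'' element with weight $w = n^{1/t}$ in $I_1$ and weight $w' = w(1 + c\eps)^{1/t}$ in $I_2$, for a sufficiently large absolute constant $c$. Then $S_t(I_1) = 2n - 1$ and $S_t(I_2) = 2n - 1 + c\eps n$, so $S_t(I_2)/S_t(I_1) \geq 1 + 2\eps$, and any $(\eps,\delta)$-estimator immediately yields a hypothesis test by thresholding its output between the two moment values.

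To bound $D(\mu_1 \| \mu_2)$, I would observe that the likelihood ratio $\mu_1(a)/\mu_2(a)$ takes only two distinct values---one for the single heavy element and one common to all $n-1$ lights---so the KL collapses to a Bernoulli-like expression in the heavy-sampling probabilities $p := w/W_1 = \Theta(n^{1/t - 1})$ and $p' := w'/W_2 = p(1 + \Theta(\eps/t))$. A careful second-order expansion---in which the first-order terms from the heavy element and the $n-1$ light elements cancel---yields $D(\mu_1 \| \mu_2) = O(\eps^2 n^{1/t - 1}) = O(\eps^2/n^{1-1/t})$.

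Finally I would invoke the standard KL-based hypothesis-testing lower bound: if an algorithm using $m$ i.i.d.\ proportional samples distinguishes $I_1, I_2$ with error at most $\delta$ on each side, then the data-processing inequality and tensorization of KL give
\[
m \cdot D(\mu_1 \| \mu_2) \;\geq\; D\bigl(\mathrm{Bern}(\delta)\,\bigl\|\,\mathrm{Bern}(1-\delta)\bigr) \;=\; \Omega\bigl(\log(1/\delta)\bigr),
\]
which combined with (ii) yields $m = \Omega(n^{1-1/t} \log(1/\delta)/\eps^2)$. The main obstacle will be the cancellation in the KL computation: one has to show that the perturbation to the light-element probabilities (induced by $W_1 \neq W_2$) does not dominate the heavy-element contribution, i.e., the $O(\eps/t)$ first-order shifts cancel and only the $O(\eps^2/t^2)$ second-order term survives. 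This holds for $t > 1$ because $w/n = n^{1/t-1} \to 0$; the implicit constants degrade as $t \downarrow 1$, which is consistent with the upper and lower bounds in the paper only matching for $t \geq 2$.
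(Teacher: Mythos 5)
Your high-level machinery (reduction from $(\eps,\delta)$-estimation to a two-point hypothesis test, tensorization of KL, and the $m\cdot D(\mu_1\|\mu_2)\geq\Omega(\log(1/\delta))$ testing bound) is sound, and your second-order cancellation computation is correct \emph{for the distribution over element identities}. The gap is in the construction: in this model a proportional sample reveals the \emph{weight} of the returned element (the upper-bound algorithm relies on this, computing $X_j=w(a_j)^t/\tilde p_j$), so the per-sample observation is the pair $(a,w(a))$, not just $a$. Your two instances give the heavy element two \emph{different observable weights}, $n^{1/t}$ versus $n^{1/t}(1+c\eps)^{1/t}$, so the outcome ``heavy element with weight $n^{1/t}$'' has probability $p=\Theta(n^{1/t-1})$ under $I_1$ and probability $0$ under $I_2$; the true KL between observation distributions is infinite, and the honest distinguishing cost of your pair is only the hitting time $\Theta(n^{1-1/t}\log(1/\delta))$ --- an algorithm that samples until it sees a non-unit weight and then reads that weight off distinguishes $I_1$ from $I_2$ with no $\eps$-dependence at all. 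The $\eps^{-2}$ factor cannot come from perturbing the value of an observable weight.

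The paper's construction is engineered precisely to dodge this: both instances contain $n_1$ elements of a \emph{common} weight $d_1$, and the instances differ only in whether $n_2$ additional elements have weight $n$ or weight $0$ (weight-$0$ elements are never returned by proportional sampling). Hence the transcripts are identically distributed until a weight-$n$ element is drawn, and the number $n_2$ of heavy elements is tuned so that simultaneously (a) their contribution $n_2 d_2^t$ is an $\eps$-fraction of $S_t$, creating the $(1+\eps)$ gap, and (b) their total sampling mass $n_2 d_2/(n_1d_1+n_2d_2)$ is only $\Theta(\eps^2 n^{1/t-1})$. The tension between (a) and (b) is exactly where the $\eps^{-2}$ arises. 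To repair your argument you would need to replace the weight perturbation by a presence/absence (or multiplicity) perturbation of equal-weight heavy elements; your KL framework then still applies, since $D(\mu_1\|\mu_2)=-\ln(1-p)=\Theta(p)$ when $\mu_1$ is the heavy-free instance, recovering the same $\Omega(\ln(1/\delta)/p)$ bound as the paper's geometric hitting-time argument.
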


For $0<t<1$, we show the following. We design an $(\eps,\delta)$-estimator for $S_t$ when $t>1/2$.

\begin{thm} There exists an algorithm $ALG$ that given proportional sampling access to the weights of the elements of a set $A$ and parameters $1/2<t<1$, $\eps,\delta \in (0,1)$, provides an $(\eps,\delta)$-estimate of $S_t$ using $O((\frac{\sqrt{n}}{\eps}+\frac{n^{\frac{1}{t}-1}}{\eps^2})\ln \frac{1}{\delta})$ samples. \end{thm}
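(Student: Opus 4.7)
The plan is to use proportional samples to estimate the ratio $S_t/W$, invoke the sum-estimation algorithm of~\cite{BT2022} to estimate $W=S_1$, and then multiply. For a proportional sample $a\in A$, define $X := w(a)^{t-1}$; a direct calculation gives $\EE[X]=S_t/W$ and $\EE[X^2]=S_{2t-1}/W$. If one can show
$$\frac{\EE[X^2]}{\EE[X]^2}\;=\;\frac{W\cdot S_{2t-1}}{S_t^{\,2}}\;\leq\;n^{1/t-1},$$
then Chebyshev's inequality combined with a median-of-means amplification over $O(\log(1/\delta))$ groups yields an $(\eps/3,\delta/2)$-estimate of $S_t/W$ using $O(n^{1/t-1}\log(1/\delta)/\eps^2)$ proportional samples. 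Combining this with the $(\eps/3,\delta/2)$ sum estimator of~\cite{BT2022}, which uses $O((\sqrt{n}/\eps)\log(1/\delta))$ proportional samples, by multiplication and a union bound gives an $(\eps,\delta)$-estimate of $S_t$ within the claimed budget.

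The heart of the argument is the variance bound displayed above. Since the ratio is scale-invariant, I would normalize $\sum_{a\in A} w(a)^t = 1$ and substitute $x_a := w(a)^t$, obtaining a probability distribution $(x_a)_{a\in A}$ on $n$ atoms. Then $W = \sum_a x_a^{1/t}$ and $S_{2t-1} = \sum_a x_a^{(2t-1)/t}=\sum_a x_a^{2-1/t}$. For $1/2<t<1$ we have $1/t>1$ and $0<2-1/t<1$. Since each $x_a\in[0,1]$, the elementary inequality $x_a^{1/t}\leq x_a$ yields $W\leq 1$; and since $x\mapsto x^{2-1/t}$ is concave on $[0,1]$, Jensen's inequality yields
$$\sum_a x_a^{2-1/t}\;\leq\;n\left(\frac{1}{n}\right)^{2-1/t}\;=\;n^{1/t-1}.$$
Multiplying the two estimates gives $W\cdot S_{2t-1}\leq n^{1/t-1}$, which is the required bound after un-normalizing.

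The variance bound is the main obstacle, and it is precisely where the hypothesis $t>1/2$ enters: one needs $2-1/t>0$ for Jensen to deliver an upper bound on $\sum x_a^{2-1/t}$, and the bound degenerates at $t=1/2$ (where $n^{1/t-1}=n$), consistent with the companion impossibility result for $t\leq 1/2$. A secondary subtlety is that $X=w(a)^{t-1}$ is unbounded as $w(a)\to 0^+$ (since $t-1<0$), so Hoeffding-type concentration does not apply; Chebyshev uses only the second-moment bound proved above, and median-of-means then delivers the claimed high-probability guarantee.
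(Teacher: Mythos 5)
Your proposal is correct and is essentially the paper's own argument: the estimator $\widetilde{W}\cdot\frac{1}{l}\sum_j w(a_j)^{t-1}$ (sum estimation via \cite{BT2022} plus Chebyshev and median amplification on $X=w(a)^{t-1}$) is exactly Algorithm \ref{alg:estmoments} with the multiplication by $\widetilde{W}$ factored out, and the key quantity you bound, $W\cdot S_{2t-1}/S_t^2\leq n^{1/t-1}$, is precisely the one the paper controls. The only (cosmetic) difference is that you derive this bound by normalizing $S_t=1$ and applying Jensen's inequality directly, whereas the paper cites the standard norm-comparison inequalities of Fact \ref{fact:norms}; your remark that the hypothesis $t>1/2$ enters exactly where $2-1/t$ must be positive is a nice explicit observation that the paper leaves implicit.
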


We show that when $t\leq 1/2$, no sublinear algorithms exist for this problem. More specifically, we show the following.

\begin{thm} For any $\eps>0$ and $t\leq 1/2$, any randomized algorithm that computes an $(\eps,1/3)$-estimate of $S_t$ requires $\Omega(n)$ proportional samples. \end{thm}

\paragraph*{Significance of our results and comparisons with known results} 
\begin{itemize}

	\item {\bf Case $t>1$}: Eden~\etal~\cite{ERS2018,ERS2019} designed sublinear algorithms for the degree distribution moment estimation problem in the sparse graph model assuming access to uniform vertex, degree and neighbour queries. Aliakbarpour~\etal~\cite{ABGPRY2018} improved the sample complexity bound for this problem assuming additional access to a random edge oracle. Aliakbarpour~\etal~observed that estimating the number of $t$-stars in a graph can be seen as a variant of the moment estimation problem. Let $w(a)$ denote the degree of vertex $a\in A$. Then, $\sum_{a \in A}\binom {w(a)}t$ counts the number of $t$-stars in the graph. The authors designed sublinear algorithms for this problem assuming access to a random edge sampling oracle in the sparse graph model. To the best of our knowledge, this problem of estimating $\sum_{a\in A} {w(a)\choose t}$ of Aliakbarpour~\etal~seems to be the most closely related problem to ours. In our setting, the result of Aliakbarpour~\etal~gives a sublinear-time algorithm for moment estimation using $O(\frac{n^{1-1/t}\ln 1/\delta}{\eps^2})$ proportional samples.\footnote{The authors stated their bound as $O(n^{1-1/t}/\eps^3)$ for constant success probability, but we believe using \cite{BT2022}, it can be improved to $O(n^{1-1/t}/\eps^2)$.} They also proved a $\Omega(n^{1-1/t})$ sample complexity lower bound for this problem. In the following we discuss the key differences of our results with theirs.

\begin{enumerate}
	\item Upper bound for moment estimation: We note that for $t>1$ our upper bound matches with that of Aliakbarpour~\etal~\cite{ABGPRY2018} and the algorithmic ideas and the analysis of these works are similar. We do not claim much technical novelty for our upper bound. However, we point out that our setup is strictly more general than that of Aliakbarpour~\etal~and their result can be recovered in our setup. We also note that our algorithm uses only proportional samples where as the algorithm of Aliakbarpour~\etal~uses queries in the sparse graph model with additional access to random edge samples. Since it is known that using uniform edge samples one can sample vertices with probabilities proportional to their degrees, it appears that access to proportional samples enables one to design sublinear algorithms for the moment estimation problem.
	\item Lower bound for moment estimation: Aliakbarpour~\etal~gave a $\Omega(n^{1-1/t})$ lower bound for this problem. We show an improved lower bound of $\Omega(\frac{n^{1-1/t}\ln\frac{1}{\delta}}{\eps^2})$ for $t>1$. For $t\geq 2$, this settles the sample complexity of $\Theta(\frac{n^{1-1/t}\ln\frac{1}{\delta}}{\eps^2})$ for the moment estimation problem with optimal dependence on all input parameters $n,\eps,\delta$.
\end{enumerate}

	\item {\bf Case $t<1$}: Moment estimation for $0<t<1$ is a well motivated problem with a number of applications \cite{JW2023}. However, to the best of our knowledge, no sublinear algorithms were known for the moment estimation problem when $t<1$. The moment estimation algorithms of Eden~\etal~\cite{ERS2018,ERS2019} and Aliakbarpour~\etal~\cite{ABGPRY2018} work only for $t\geq 1$. We design the first sublinear algorithm for the moment estimation problem for $t>1/2$. We also show that for $t\leq 1/2$, no sublinear algorithms exist for this problem. 
\end{itemize}

\subsubsection{Characterization of Sample Complexity}
%We introduce a {\it moment-density} parameter of the input that governs the sample complexity of the moment estimation problem using proportional sampling and derive almost tight bounds on the sample complexity with respect to this parameter.

The moment estimation problem using proportional sampling requires $\Omega(\frac{n^{1-1/t}\ln\frac{1}{\delta}}{\eps^2})$ samples in the worst case. We study the moment estimation problem in a beyond worst-case analysis paradigm and identify a parameter of the input that characterizes the sample complexity of the problem using proportional sampling. For the degree distribution moment estimation problem in graphs, Eden~\etal~\cite{ERS2019} identified the {\it arboricity} of a graph as the relevant parameter and obtained sample complexity bounds in terms of the arboricity of the graph.

We introduce a new \textit{moment-density} parameter $\rho$ of the input that characterizes the sample complexity for the moment estimation problem. For $W=\sum_{a\in A} w(a)$ and $S_t=\sum_{a\in A} w(a)^t$, we define the parameter $\rho$ as $$\rho=\max_{L\subseteq A} \frac{\frac{\sum_{a\in L} w(a)^t}{\sum_{a\in L} w(a)}}{\frac{\sum_{a\in A} w(a)^t}{\sum_{a\in A} w(a)}}=\max_{L\subseteq A} \frac{\sum_{a\in L} w(a)^t}{\sum_{a\in L} w(a)} \cdot \frac{W}{S_t}$$ 

The motivation for writing the moment-density parameter $\rho$ in the above form is as follows. The key idea behind the lower bound instances for the moment estimation problem using proportional sampling (described in Section \ref{sec:lower-proportional}) is to assign large weights on only a few input elements such that these elements are not easily detected using proportional sampling but the $t$th power of their weights dominate the moment value of the instance. This is why proportional sampling requires a lot of samples on these kind of instances. Now, if we were allowed to sample elements of $A$ with probabilities proportional to their $t$th power of the weights, then the complexity of the moment estimation problem becomes the same as the complexity of the sum estimation problem using proportional sampling. But, when we sample using proportional sampling, there might be some elements with slightly larger weights having outsized influence in the moment value of the instance. The moment-density parameter $\rho$ is defined such that its value will be large on those instances. For any subset $L\subseteq A$, let $\rho_L$ denote the ratio of the fractional contribution of the elements in $L$ to the moment value and the probability that an element in $L$ is going to be sampled using proportional sampling. We have $\rho=\max_{L\subseteq A} \rho_L$. Alternatively, $\rho$ captures the maximum contribution of any subset $L$ to the $t$-th moment $S_t$ relative to the sum of the weights of elements in $L$. For scaling we divide it by $S_t/W$.

We show an upper bound for $(\eps,\delta)$-estimate of the moment using proportional sampling where we write the sample complexity in terms of $\eps,\delta$ and $\rho$.

\begin{thm} There exists an algorithm $ALG$ that given proportional sampling access to the weights of the elements in a set $A$ with moment-density parameter $\rho$ and parameters $\eps,\delta\in (0,1), t>1$, provides an $(\eps,\delta)$-estimate of $S_t$ using $O((\sqrt{n}/\eps+\frac{\rho}{\eps^2})\ln 1/\delta)$ samples. \end{thm}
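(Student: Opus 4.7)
The plan is to use the standard estimator that averages $w(a)^{t-1}$ over proportional samples and to show that the moment-density parameter $\rho$ directly controls its variance. Concretely, I would draw $s=\Theta(\rho/\eps^2)$ proportional samples $a_1,\dots,a_s$ and set $\bar Y=\frac{1}{s}\sum_{i=1}^{s} w(a_i)^{t-1}$. In parallel, I would invoke the $O(\sqrt n/\eps)$-sample sum-estimation routine of Beretta and T{\v{e}}tek on an independent batch of proportional samples to obtain $\hat W$, a $(1\pm\eps/3)$-relative approximation of $W$ with constant probability. The final estimator is $\hat Z=\hat W\cdot \bar Y$.

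For the analysis, observe that under proportional sampling $\EE[w(a)^{t-1}]=\sum_{a\in A}\frac{w(a)}{W}\,w(a)^{t-1}=S_t/W$, so $\EE[\bar Y]=S_t/W$ and $\hat W\cdot \bar Y$ targets $S_t$ in expectation. The per-sample second moment is $\EE[w(a)^{2(t-1)}]=S_{2t-1}/W$, hence $\mathrm{Var}(\bar Y)\le S_{2t-1}/(sW)$. Chebyshev's inequality then gives relative error $\eps/3$ on $\bar Y$ with constant probability whenever $s=\Omega\!\bigl(\frac{S_{2t-1}\,W}{\eps^2\,S_t^2}\bigr)$, and combining the two independent constant-probability estimates via a union bound and a $(1\pm\eps/3)(1\pm\eps/3)\subset(1\pm\eps)$ product calculation shows that $\hat Z$ lies within $(1\pm\eps)S_t$.

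The crux of the argument is the inequality $\frac{S_{2t-1}\,W}{S_t^2}\le \rho$, which is where the moment-density parameter is used. Specializing the defining maximum of $\rho$ to a singleton $L=\{a\}$ yields $w(a)^{t-1}\cdot\frac{W}{S_t}\le \rho$ for every $a\in A$. Writing $S_{2t-1}/S_t$ as a weighted average of the values $w(a)^{t-1}$ with weights $w(a)^t/S_t$ then gives
$$\frac{S_{2t-1}}{S_t}=\sum_{a\in A}\frac{w(a)^t}{S_t}\,w(a)^{t-1}\;\le\;\max_{a\in A} w(a)^{t-1}\;\le\;\rho\cdot\frac{S_t}{W},$$
which rearranges to the desired bound. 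Consequently $s=O(\rho/\eps^2)$ samples suffice for the average stage.

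Finally, standard median-of-means amplification on $\Theta(\ln 1/\delta)$ independent copies of $\hat Z$ boosts the constant success probability to $1-\delta$, yielding the claimed $O\!\bigl((\sqrt n/\eps+\rho/\eps^2)\ln(1/\delta)\bigr)$ sample complexity. The main conceptual step is the short inequality $S_{2t-1}W/S_t^2\le\rho$: it is what lets us replace the worst-case $n^{1-1/t}$ factor from Theorem~1.1 by the instance-adaptive parameter $\rho$. The remaining ingredients, namely the estimator design, the variance computation, the Chebyshev-plus-independent-product bookkeeping, and the median-of-means amplification, follow the same blueprint as the $t>1$ upper bound and the Beretta--T{\v{e}}tek sum estimator; we treat (an upper bound on) $\rho$ as provided to the algorithm, consistent with the beyond-worst-case framing.
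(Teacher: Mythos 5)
Your proposal is correct and follows essentially the same route as the paper: the same estimator (an estimate $\hat W$ of $W$ times the empirical mean of $w(a)^{t-1}$ over proportional samples), the same Chebyshev variance bound of $S_{2t-1}W/(s\,S_t^2)$, and the same key inequality $S_{2t-1}W/S_t^2\le\rho$ obtained by bounding $S_{2t-1}/S_t$ by $\max_a w(a)^{t-1}$ and specializing the definition of $\rho$ to singletons. The only cosmetic differences are where the factor $\hat W$ is attached and how the $\delta$-amplification is bookkept.
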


Next, we show an almost tight lower bound on the sample complexity for the moment estimation problem on instances with moment-density parameter $\rho$ in terms of $\rho,\eps,\delta$.

\begin{thm} For any $\eps,\delta\in (0,1)$ and $t>1$, any randomized algorithm for $(\eps,\delta)$-estimate of $S_t$ on an instance with moment-density parameter $\rho$ requires $\Omega(\frac{\rho \ln 1/\delta}{\eps })$ proportional samples. \end{thm}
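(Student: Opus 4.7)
The plan is to establish the lower bound via the two-point method combined with the Bretagnolle--Huber inequality. I will exhibit two instances $I_1, I_2$ on the same ground set of $n$ elements. In $I_1$ every element has weight $1$. In $I_2$, $k$ designated elements have weight $h := \rho^{1/(t-1)}$ and the remaining $n-k$ have weight $1$, with $k := \lceil c \eps n / h^t \rceil$ for an absolute constant $c$. I take $n$ large enough and $\rho \geq 1$ so that $kh$ and $kh^t$ are both much smaller than $n$.

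I first check the moment-density of $I_2$ and the moment gap. For any subset $L$, the quantity $\tfrac{\sum_{a \in L} w(a)^t}{\sum_{a \in L} w(a)}$ is a weighted average of the values $w(a)^{t-1}$ over $a \in L$, and so for $t > 1$ it is maximized by the set of heavy elements, giving $\rho(I_2) = h^{t-1} \cdot W / S_t \leq h^{t-1} = \rho$. On the other hand $S_t(I_2)/S_t(I_1) = 1 + k(h^t - 1)/n \geq 1 + 3\eps$ for a suitable choice of $c$, so the $(1 \pm \eps)$-intervals around $S_t(I_1)$ and $S_t(I_2)$ are disjoint. Any $(\eps, \delta)$-estimator therefore induces a hypothesis test separating $I_1$ from $I_2$ with error at most $\delta$ on each side, which forces $\mathrm{TV}(P_1^s, P_2^s) \geq 1 - 2\delta$, where $P_i$ denotes the proportional sampling distribution of $I_i$.

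Here $P_1$ is uniform on the $n$ elements while $P_2$ places mass $h/W$ on each heavy element and $1/W$ on each light one, with $W = kh + n - k$. A direct calculation gives
\[
 d_{KL}(P_1 \| P_2) \;=\; \ln(W/n) - (k/n)\ln h \;\leq\; \ln\bigl(1 + k(h-1)/n\bigr) \;\leq\; kh/n \;=\; O(\eps/\rho),
\]
using $\ln(1+x) \leq x$ and the fact that $h \geq 1$. The Bretagnolle--Huber inequality $\mathrm{TV}(P_1^s, P_2^s) \leq \sqrt{1 - \exp(-s \cdot d_{KL}(P_1 \| P_2))}$, combined with $\mathrm{TV}(P_1^s, P_2^s) \geq 1 - 2\delta$, yields $s \cdot d_{KL}(P_1 \| P_2) \geq \ln(1/(4\delta))$, and therefore $s = \Omega(\rho \ln(1/\delta)/\eps)$.

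The main subtlety is the direction of the KL: the reverse $d_{KL}(P_2 \| P_1)$ is much larger, and a Pinsker- or $\chi^2$-based argument would lose an additional factor of $\eps$ and yield only $\Omega(\rho \ln(1/\delta)/\eps^2)$. Feeding the one-sided bound $d_{KL}(P_1 \| P_2) = O(\eps/\rho)$ into Bretagnolle--Huber is what produces the correct $1/\eps$ rate together with the $\ln(1/\delta)$ factor simultaneously. The rest of the argument is a routine tuning of $h = \rho^{1/(t-1)}$ to match the prescribed moment density and checking that the Taylor approximations used in the KL computation are valid in the chosen parameter regime.
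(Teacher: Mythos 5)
Your core computation is sound and your route is genuinely different from the paper's: the paper constructs two instances that both contain elements of a large weight $d_2$ (differing only in how many), argues that the instances are indistinguishable until a heavy element is sampled, and lower-bounds the number of proportional samples needed to hit one; you instead run a two-point argument through $d_{KL}(P_1\|P_2)$ and Bretagnolle--Huber. Your version has a real advantage in rigor: the ``must sample a heavy element'' step in the paper is a hitting-style argument, whereas your one-sided KL bound $d_{KL}(P_1\|P_2)=\ln(W/n)-(k/n)\ln h\le kh/n=O(\eps/\rho)$ is airtight, correctly delivers the $\ln(1/\delta)$ factor, and your remark about the direction of the KL (avoiding the extra $1/\eps$ that Pinsker or $\chi^2$ would cost) is exactly the right observation. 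Your construction is also more flexible in the range of $\rho$ it realizes, since $h=\rho^{1/(t-1)}$ is a free knob, while the paper's construction pins $\rho$ to roughly $n^{1-1/t}/(3\eps)$.

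There is, however, one genuine gap. The theorem is a lower bound for algorithms \emph{on instances with moment-density parameter $\rho$}, and the paper is explicit that both hard instances must have their moment-density values within a constant factor of each other (its two instances have $\rho_1,\rho_2=\Theta(n^{1-1/t}/\eps)$). Your $I_1$ is the all-ones instance, whose moment-density parameter is exactly $1$, while $I_2$ has moment-density $\Theta(\rho)$. So your pair does not live inside the promise class when $\rho\gg 1$, and the two-point argument only rules out algorithms that must work on both an instance with parameter $1$ and one with parameter $\rho$; it says nothing about an algorithm that is promised $\rho(A)=\Theta(\rho)$. The fix is the same device the paper uses: put heavy elements of weight $h$ in \emph{both} instances, say $k$ of them in $I_1$ and $3k$ (or $k/3$, padding with zero-weight or weight-one elements) in $I_2$. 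Both instances then have moment-density $\Theta(h^{t-1}\cdot W/S_t)=\Theta(\rho)$, the moment gap is still $\Theta(\eps)$ for the same choice of $k$, and your KL computation goes through essentially unchanged (the per-sample KL between the two proportional distributions is still $O(kh/n)=O(\eps/\rho)$). A second, minor point: a ratio $S_t(I_2)/S_t(I_1)\ge 1+3\eps$ is what you need for the $(1\pm\eps)$ intervals to be disjoint (a bare $1+\eps$ ratio would not suffice), so keep the constant $c$ large enough; you have done this, and the paper itself is slightly loose on the analogous constant.
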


\subsubsection{Moment Estimation using Hybrid Sampling}

Proportional sampling-based algorithms for the moment estimation problem require $\Omega(\frac{n^{1-1/t}\ln1/\delta}{\eps^2})$ samples for $t>1$ in the worst case. One natural idea to design algorithms with improved sample complexity bounds is to give more power to the algorithm designer in the form of access to a stronger query oracle. In this section we explore whether additional access to a uniform sampling oracle allows one to design an algorithm with improved sample complexity for this problem. The hybrid sampling framework allows one to use both proportional and uniform samples. This study is motivated by the fact that, for the sum estimation problem, Motwani~\etal~\cite{MPX2007} and Beretta and T{\v{e}}tek \cite{BT2022} exploited access to a hybrid sampling oracle to design algorithms that make $\tilde{O}(n^{1/3}/\eps^{9/2})$ and $O(n^{1/3}/\eps^{4/3})$ samples, respectively. This is in contrast to the $\Theta(\sqrt{n}/\eps)$ sample complexity bound for the sum estimation problem using only proportional samples. In this work we explore whether hybrid sampling might give us better sample complexity bounds for the moment estimation problem. We prove a lower bound result showing that no improved algorithm with better sample complexity bounds exists for the moment estimation problem using hybrid sampling. We state the result next and prove it in Section \ref{sec:lower-hybrid}.

\begin{thm}(Lower bound using hybrid sampling) For any $\eps,\delta \in (0,1)$ and $t>1$, any algorithm having access to a hybrid sampling oracle requires at least $\Omega(\frac{n^{1-1/t}\ln 1/\delta}{\eps^2})$ samples to compute an $(\eps,\delta)$-estimate for $S_t$. \end{thm}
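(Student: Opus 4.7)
My plan is to apply Yao's minimax principle and lift the proportional-sampling lower bound of Theorem 2 to the hybrid setting. The central observation is that on the hard input distribution used to prove Theorem 2, uniform queries carry strictly less information per sample than proportional queries, so no algorithm can beat the proportional lower bound by additionally accessing a uniform oracle.

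First, by Yao's minimax I reduce to lower-bounding the sample complexity of deterministic adaptive hybrid algorithms against a worst-case input distribution. I reuse the pair of instance distributions $\mathcal{D}_0, \mathcal{D}_1$ (mixed with equal prior) from the proof of Theorem 2: each instance contains a small number of planted ``heavy'' elements of weight $M = \Theta(n^{1/t})$ among $\Theta(n)$ unit-weight elements, and $\mathcal{D}_0, \mathcal{D}_1$ differ slightly in the parameters of the planting so that $S_t^{(1)}/S_t^{(0)} \geq 1+\eps$ with high probability over the planting. Any $(\eps,\delta)$-estimator must distinguish the sample transcripts produced under $\mathcal{D}_0$ from those under $\mathcal{D}_1$ with probability at least $1-\delta$.

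The core step is a per-query KL comparison between the two oracles. Under instance $\mathcal{I}_b$, a uniform query returns a heavy element with probability $p_b^U = k_b/n$, whereas a proportional query returns a heavy element with probability $p_b^P = k_b M/W_b \approx M\cdot p_b^U$ (since the total weight $W_b$ is $\Theta(n)$ for the planted instance). A short calculation then yields single-query KL divergences $\mathrm{KL}_U = O(\eps^2 k/n)$ and $\mathrm{KL}_P = O(\eps^2 k M/n)$, so uniform queries are a factor $M$ less informative per sample than proportional queries. Applying the chain rule for KL divergence to the algorithm's transcript, and using that for a deterministic algorithm the next query is a fixed function of history, the total KL between transcripts under $\mathcal{D}_0$ and $\mathcal{D}_1$ is bounded by $q\cdot \mathrm{KL}_P = O(q\eps^2/n^{1-1/t})$, where I plug in $kM = \Theta(n^{1/t})$ for the Theorem 2 construction. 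Pinsker's inequality then converts this into a total-variation bound that forces $q = \Omega(n^{1-1/t}/\eps^2)$ for the algorithm to distinguish with constant probability. A standard amplification via $\Theta(\ln(1/\delta))$ independent parallel copies of the hard construction converts this into the claimed $\ln(1/\delta)$ factor.

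The main obstacle is controlling the per-query conditional KL uniformly over histories of an adaptive algorithm, since conditioning on past observations can shift the posterior over which of $\mathcal{D}_0, \mathcal{D}_1$ the algorithm is facing, and in principle can also shift the posterior over the identities of the planted heavies. I would handle this by splitting histories into typical ones (where few heavies have been observed, so the posterior over instances remains close to the prior and the per-query KL bound transfers) and atypical ones (a low-probability event, handled by a union bound with negligible contribution to the final TV). A secondary subtlety is verifying that $S_t$ concentrates tightly enough around its expectation under the random planting so that the $(1+\eps)$-gap holds with high probability rather than merely in expectation; this follows from a standard Chernoff/Hoeffding argument once the number of planted heavies is taken large enough.
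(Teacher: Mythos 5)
Your high-level strategy is the same as the paper's: recycle the hard instance pair from the proportional-sampling lower bound and argue that uniform queries are even less informative on it than proportional queries, so the hybrid bound is the minimum of the two single-oracle bounds, which is the proportional one. Where you diverge is in the machinery. The paper's construction makes the two families differ only in that $n_2\approx \eps^{\frac{2t-1}{t-1}}$ designated elements have weight $n$ in one instance and weight $0$ in the other; consequently the transcripts under the two instances are \emph{identical} until one of those designated elements is returned by some query, and the whole argument reduces to bounding the hitting probability per query: at most $\frac{1}{1+n^{1-1/t}/\eps^2}$ for a proportional query and at most $\frac{1}{1+n/\eps^{(2t-1)/(t-1)}}$ for a uniform query, the latter being smaller. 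No KL chain rule, no adaptivity analysis, and no typical/atypical decomposition is needed; the $\ln(1/\delta)$ factor falls out of the number of geometric trials needed to see a hit with confidence $1-\delta$.

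Your KL route can be made to work, but two steps as written would fail. First, Pinsker only converts the transcript KL into a constant-probability distinguishing bound, and "amplification via $\Theta(\ln(1/\delta))$ independent parallel copies" is not a valid lower-bound move: an algorithm facing a single instance is under no obligation to solve many independent copies. To get the $\ln(1/\delta)$ inside the KL framework you must instead invoke the high-confidence form of the data-processing inequality (distinguishing with error $\delta$ forces transcript KL $\geq \Omega(\ln(1/\delta))$), or fall back on the hitting/geometric argument the paper uses. Second, your plant parameters do not support the claimed per-query KL. With $kM=\Theta(n^{1/t})$ and $M=\Theta(n^{1/t})$ you have $k=\Theta(1)$ heavy elements, and there is no way to perturb an integer count of $\Theta(1)$ elements by a relative factor of $\eps$; so the $\eps^2$ in $\mathrm{KL}_P=O(\eps^2 kM/n)$ is unsubstantiated by the construction as stated. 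The $\eps$-dependence of the final bound comes precisely from tuning the light weight to $n^{1-1/t}\eps^{1/(t-1)}$ and the (expected) heavy count to $\approx\eps^{\frac{2t-1}{t-1}}$ as in the paper, so that the $S_t$ gap is exactly $(1+\eps)$ while the proportional hitting probability is $\approx \eps^2/n^{1-1/t}$; with unit-weight light elements and weight-$n^{1/t}$ heavies you would not recover the $1/\eps^2$ factor.
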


\subsection{Techniques} The main idea behind the upper bound is fairly standard and we describe it as follows. Suppose we obtain a sample $a\in A$ of weight $w(a)$ using proportional sampling. Let us set $X=w(a)^t$. Then, $\EE[X]=\sum_{a\in A} w(a)^t p_a$, where $p_a=\frac{w(a)}{W}$ denotes the probability of sampling the element $a$ and $W=\sum_{a\in A} w(a)$. Suppose we know $p_a$ for each $a\in A$, then $X=w(a)^t/p_a$ would give us an unbiased estimator of $S_t$. However, we don't know $W$ and hence don't know about these probabilities $p_a$. Our crucial observation here is to use the sum estimation algorithm in Beretta and T{\v{e}}tek \cite{BT2022} to obtain an $(\eps_1,\delta/2)$ estimate $\wt{W}$ of $W$, and use this in turn to obtain estimates $\tilde{p}_a$ for $p_a$. Our estimator built in this manner would not be an unbiased estimator of $S_t$ but we can reduce the bias of the estimator considerably by choosing an appropriate $\eps_1$. For our algorithmic results, we use $\eps_1=\eps/2$. We use similar techniques for designing algorithms in related settings. For the lower bound we use Yao's minimax lemma to construct families of instances that cannot be distinguished using a small number of proportional samples. A number of lower bound results were known for variants of the moment estimation problem \cite{GRS2011,ERS2018,ABGPRY2018}. Our lower bound constructions are motivated from these lower bound constructions. Unlike the earlier lower bounds though, our lower bounds have optimal dependence in $n,\eps,\delta$. 

\subsection{Related Works} Motwani~\etal~\cite{MPX2007} initiated the study of the sum estimation problem using access to a proportional sampling oracle and designed the first sublinear algorithm for this problem that uses $\tilde{O}(\sqrt{n})$ queries. They also designed sublinear algorithms in the hybrid sampling framework using $\tilde{O}(n^{1/3})$ queries. Beretta and T{\v{e}}tek \cite{BT2022} have recently improved these results; they prove $\Theta(\frac{\sqrt{n}}{\eps})$ sample complexity bound using proportional sampling and in the hybrid sampling setting gives almost tight sample complexity bound of $O(n^{1/3}/\eps^{4/3})$. Variants of these problems are also studied in the graph parameter estimation literature, where the elements of the set correspond to the vertices of a graph and the weights correspond to the degrees of the vertices and we are given query access to the graph. The sum estimation problem in this context becomes the {\it edge estimation} problem and moment estimation is known as the degree distribution moment estimation problem. Eden~\etal~\cite{ERS2018,ERS2019} studied the {\it degree distribution moment estimation} problem in the graph query model and designed sublinear algorithms with improved sample complexity bounds. Aliakbarpour~\etal~\cite{ABGPRY2018} studied the estimation of the number of $t$-stars in a graph and showed this problem to be closely related to the moment estimation problem. Moment estimation problem is also studied in streaming and distributed communication models \cite{AMS1999,JW2023}.

%\subsection{Organization} We describe the algorithm for moment estimation using proportional sampling in Section \ref{sec:moments} and prove the lower bound in Section \ref{sec:lower-proportional}. For $0<t<1$, the upper and lower bounds on the sample complexity for moment estimation are analyzed in Section \ref{sec:small-t}. The moment-density parameter and the characterization of sample complexity in terms of this parameter is given in Section \ref{sec:characterize}. The lower bound on sample complexity using hybrid sampling is described in Section \ref{sec:lower-hybrid}.

\section{Preliminaries} We use the following well known facts in the analysis.

\begin{fact}\label{fact:norms} For any vector $x\in \C^n$, and for any $0<r<p$, we have $||x||_p\leq ||x||_r\leq n^{(1/r-1/p)} ||x||_p$. \end{fact}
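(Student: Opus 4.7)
The plan is to prove the two inequalities separately using standard $\ell_p$-norm arguments. Fact~\ref{fact:norms} is a classical result and no real novelty is required; the main point to get right is the choice of H\"older exponents, so that the power of $n$ comes out to exactly $n^{1/r - 1/p}$.

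For the left inequality $\|x\|_p \le \|x\|_r$, I would normalize. Assuming $x \ne 0$, set $y_i = |x_i|/\|x\|_r$, so each $y_i$ lies in $[0,1]$ by the definition of the $r$-norm. On $[0,1]$ the map $u \mapsto u^p$ is pointwise at most $u \mapsto u^r$ whenever $p > r$, so $y_i^p \le y_i^r$ coordinate-wise. Summing over $i$ gives $\sum_i y_i^p \le \sum_i y_i^r = 1$, and taking $p$-th roots together with unwinding the normalization yields $\|x\|_p \le \|x\|_r$.

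For the right inequality $\|x\|_r \le n^{1/r - 1/p}\|x\|_p$, I would apply H\"older's inequality to the factorization $|x_i|^r = |x_i|^r \cdot 1$ with conjugate exponents $p/r$ and $p/(p-r)$, both of which are valid since $p > r > 0$. This yields $\sum_i |x_i|^r \le \bigl(\sum_i |x_i|^p\bigr)^{r/p} \cdot n^{(p-r)/p}$, and taking $r$-th roots together with the algebraic identity $(p-r)/(pr) = 1/r - 1/p$ produces the stated bound.

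Edge cases are handled trivially: if $x = 0$ both sides vanish, and zero coordinates contribute nothing since $0^r = 0$ for $r > 0$. The argument applies equally to complex vectors as stated in the fact, since throughout we work only with the magnitudes $|x_i|$. I do not anticipate any serious obstacle beyond bookkeeping the exponents correctly.
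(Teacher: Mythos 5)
Your proof is correct, and it is the standard argument for this classical fact; the paper itself states it without proof, so there is no alternative derivation to compare against. Both halves check out: the normalization trick gives $\|x\|_p\le\|x\|_r$, and H\"older with conjugate exponents $p/r$ and $p/(p-r)$ gives $\sum_i|x_i|^r\le\bigl(\sum_i|x_i|^p\bigr)^{r/p}n^{(p-r)/p}$, hence the factor $n^{1/r-1/p}$ after taking $r$-th roots. One point worth making explicit: since the paper later applies this fact with exponents in $(0,1)$ (e.g.\ $\|\cdot\|_{2t-1}$ for $1/2<t<1$, where the ``norm'' is only a quasi-norm), it matters that your argument never requires $r\ge 1$ or $p\ge 1$ --- and indeed it does not, because the H\"older exponents $p/r>1$ and $p/(p-r)>1$ are conjugate for any $0<r<p$, and the pointwise bound $u^p\le u^r$ on $[0,1]$ likewise needs only $p>r>0$.
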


\begin{lem}(Chernoff bound \cite{MU2017})\label{lem:chernoff} Let $Z_1,Z_2,\ldots,Z_v$ be independent and identically distributed Bernoulli random variables. Let $Z=\sum_{i=1}^v Z_i$. Then, $\Pr[Z\leq(1-\gamma) \EE[Z]]\leq e^{-1/2 \cdot \gamma^2 \cdot \EE[Z]}$. \end{lem}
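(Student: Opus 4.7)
The plan is to prove the lower-tail Chernoff bound by the standard exponential moment / Bernstein--Chernoff trick, then simplify the resulting expression using a Taylor estimate.

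First, I would reduce the tail event to a statement about the moment generating function via Markov's inequality. For any parameter $s>0$, the event $Z\leq (1-\gamma)\EE[Z]$ is the same as $e^{-sZ}\geq e^{-s(1-\gamma)\EE[Z]}$, so by Markov
\[
\Pr\bigl[Z\leq (1-\gamma)\EE[Z]\bigr]\;\leq\;\frac{\EE[e^{-sZ}]}{e^{-s(1-\gamma)\EE[Z]}}.
\]
Next, using that the $Z_i$ are i.i.d.\ Bernoulli (say with mean $p$, so $\EE[Z]=vp=:\mu$), I would factor the MGF as $\EE[e^{-sZ}]=\prod_{i=1}^{v}\EE[e^{-sZ_i}]=\bigl(1+p(e^{-s}-1)\bigr)^v$, and then apply the elementary inequality $1+x\leq e^x$ to bound this by $\exp\bigl(vp(e^{-s}-1)\bigr)=\exp\bigl(\mu(e^{-s}-1)\bigr)$. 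Combining gives the one-parameter family of bounds
\[
\Pr\bigl[Z\leq(1-\gamma)\mu\bigr]\;\leq\;\exp\bigl(\mu(e^{-s}-1)+s(1-\gamma)\mu\bigr).
\]

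The second step is to optimize the free parameter. Differentiating the exponent in $s$ and setting it to zero yields $s=-\ln(1-\gamma)>0$ (valid since $\gamma\in(0,1)$). Substituting back produces the classical tight Chernoff form
\[
\Pr\bigl[Z\leq(1-\gamma)\mu\bigr]\;\leq\;\Bigl(\tfrac{e^{-\gamma}}{(1-\gamma)^{1-\gamma}}\Bigr)^{\mu}
\;=\;\exp\bigl(\mu\bigl(-\gamma-(1-\gamma)\ln(1-\gamma)\bigr)\bigr).
\]

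Finally, I would reduce this to the claimed clean bound $e^{-\gamma^2\mu/2}$ by showing $-\gamma-(1-\gamma)\ln(1-\gamma)\leq -\gamma^2/2$ for all $\gamma\in(0,1)$. This follows by expanding $\ln(1-\gamma)=-\sum_{k\geq 1}\gamma^k/k$, multiplying by $(1-\gamma)$, and observing that after cancellation the exponent equals $-\sum_{k\geq 2}\gamma^k/(k(k-1))$, which is at most $-\gamma^2/2$ since every term in the series is non-positive and the $k=2$ term alone contributes $-\gamma^2/2$. The main obstacle, if any, is just this last analytic step, where one must be careful with signs when comparing the optimized exponent to the target quadratic; everything preceding it is a mechanical application of the Chernoff method.
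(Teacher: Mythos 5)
Your proof is correct and is precisely the standard exponential-moment argument: Markov's inequality applied to $e^{-sZ}$, the bound $1+x\leq e^x$ on the factored MGF, optimization at $s=-\ln(1-\gamma)$, and the series estimate $-\gamma-(1-\gamma)\ln(1-\gamma)=-\sum_{k\geq 2}\gamma^k/(k(k-1))\leq -\gamma^2/2$, which checks out. The paper gives no proof of this lemma --- it is quoted directly from Mitzenmacher and Upfal \cite{MU2017} --- and your argument is essentially the one found there, so there is nothing to reconcile.
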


% !TEX root = paper.tex
\section{Estimation of Moments using Proportional Sampling}\label{sec:moments}

We describe our algorithm for the moment estimation problem using proportional sampling. We mentioned earlier that for $t>1$, our upper bounds match those of Aliakbarpour~\etal~\cite{ABGPRY2018}. However, since our algorithm works in strictly more general settings and the algorithm for $1/2< t<1$ uses the same ideas, we describe it in detail. Let $A$ be a set of $n$ weighted elements. We assume access to a proportional sampling oracle on the weights of the elements in $A$. For a proportional sample, the oracle returns an element $a_j\in A$ with probability $w(a_j)/W$, where $W=\sum_{a_j\in A} w(a_j)$. Given parameters $t>1,\eps,\delta\in (0,1)$, we design an $(\eps,\delta)$-estimate of $S_t=\sum_{a_j\in A} w(a_j)^t$.

\begin{thm} There exists an algorithm $ALG$ that given proportional sampling access to the weights of the elements in a set $A$ and parameters $t>1,\eps,\delta\in (0,1)$, provides an $(\eps,\delta)$-estimate of $S_t$ using $O(\frac{\sqrt{n}\log 1/\delta}{\eps} + \frac{n^{1-1/t} \log 1/\delta}{\eps^2})$ samples. \end{thm}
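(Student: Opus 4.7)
The plan is to construct an almost-unbiased estimator of $S_t$ based on a plug-in trick: first approximate the normalizing constant $W=\sum_{a\in A}w(a)$ by running the sum-estimation algorithm of Beretta--T\v{e}tek with parameters $(\eps/2,\delta/2)$ to obtain $\wt{W}$ with $|\wt{W}-W|\le (\eps/2) W$ using $O(\sqrt{n}\log(1/\delta)/\eps)$ proportional samples. Then, independently draw $k$ proportional samples $a_1,\dots,a_k$ and form
\[
Y \;=\; \frac{1}{k}\sum_{i=1}^{k} \wt{W}\cdot w(a_i)^{t-1},
\]
which is the plug-in version of the ideal estimator $w(a)^t/p_a$ with $p_a=w(a)/W$ replaced by $\wt{p}_a=w(a)/\wt{W}$.

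For the analysis, condition on $\wt{W}$. Then $\EE[Y\mid\wt{W}] = \wt{W}\sum_{a}w(a)^{t-1}\cdot w(a)/W = (\wt{W}/W)\,S_t$, so the conditional bias is at most $(\eps/2)S_t$. The key step is the variance bound: writing $X=\wt{W}\,w(a)^{t-1}$ for a single sample,
\[
\EE[X^2\mid \wt{W}] \;=\; \frac{\wt{W}^2}{W}\sum_{a\in A}w(a)^{2t-1},
\qquad
\frac{\EE[X^2\mid\wt{W}]}{\EE[X\mid\wt{W}]^2} \;=\; \frac{W\sum_a w(a)^{2t-1}}{S_t^{\,2}}.
\]
Since $t>1$ implies $t<2t-1$, Fact~\ref{fact:norms} (applied with $r=t, p=2t-1$) gives $\|w\|_{2t-1}\le \|w\|_t$, so $\sum_a w(a)^{2t-1}\le S_t^{(2t-1)/t}=S_t^{\,2-1/t}$. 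Also applying Fact~\ref{fact:norms} with $r=1,p=t$ gives $W\le n^{1-1/t}S_t^{1/t}$. Multiplying these, the variance-to-squared-mean ratio is at most $n^{1-1/t}$.

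With this bound in hand, the concentration step is standard. Running $k = C\,n^{1-1/t}/\eps^2$ samples inside one batch, Chebyshev yields $|Y-\EE[Y\mid\wt{W}]|\le (\eps/4)\,\EE[Y\mid\wt{W}]$ with constant probability; combined with the $(\eps/2)$ bias this gives relative error $\eps$ with constant probability, after rescaling $\eps$. Repeating this batch $O(\log(1/\delta))$ times independently and taking the median (median-of-means, which is the standard trick used in Lemma~\ref{lem:chernoff}) boosts the confidence to $1-\delta/2$. A union bound over the two stages gives overall failure probability at most $\delta$. Adding the two sample costs yields $O\!\big(\tfrac{\sqrt{n}\log(1/\delta)}{\eps}+\tfrac{n^{1-1/t}\log(1/\delta)}{\eps^{2}}\big)$.

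The main technical content lives in the variance bound, specifically in choosing the right two applications of the norm inequality: one to trade $\sum w(a)^{2t-1}$ against $S_t$ through the exponent jump from $t$ to $2t-1$, and one to trade $W$ against $S_t^{1/t}$ through the jump from $1$ to $t$. Getting the exponent $n^{1-1/t}$ (rather than the weaker $n^{t-1}$ one would obtain from the naive bound $w(a)^{2t-1}\le W^{t-1}w(a)^t$) is the only place where any care is needed; everything else is routine plug-in estimation plus median-of-means boosting, and the proof that the bias incurred by using $\wt{W}$ instead of $W$ composes correctly with the Chebyshev tail bound.
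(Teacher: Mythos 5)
Your proposal is correct and follows essentially the same route as the paper: the same plug-in estimator $\wt{W}\,w(a)^{t-1}$ built from a Beretta--T\v{e}tek estimate of $W$, the same variance-to-squared-mean bound of $n^{1-1/t}$ obtained from the two applications of the norm inequality (exponents $t\to 2t-1$ and $1\to t$), and the same Chebyshev-plus-median boosting with a union bound over the two stages. The only cosmetic difference is that you cancel $\wt{W}^2$ in the ratio directly rather than carrying $(1\pm\eps_1)$ factors through the variance computation.
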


\begin{algorithm}[H]
    \caption{Moment Estimation using Proportional Sampling}
    \label{alg:estmoments}
    \begin{algorithmic}[1] % The number tells where the line numbering should start
        \Procedure{MomentEstimator}{$A,t,\eps,\delta$} %\Comment{The g.c.d. of a and b}
                \State Let $\wt{W}$ denote an $(\eps_1=\eps/2,\delta/2)$-estimate of $W$ using the sum estimation algorithm of \cite{BT2022}. This step requires $480\cdot \frac{\sqrt{n}\log (2/\delta)}{\eps}$ proportional samples.
                \For $~r=1$ to $v = 48 \cdot \log 2/\delta$  %median of means improvement
                        \For $~j=1$ to $l=48 \cdot n^{1-1/t}/\eps^2$
                                \State Let $a_j$ denote a proportional sample of weight $w(a_j)$. 
				\State Compute $\ti{p}_j=\frac{w(a_j)}{\wt{W}}$.
                                \State Set $X_j=\frac{w(a_j)^t}{\ti{p}_j}$
                        \EndFor
                        \State $Y_r = \frac{\sum_{j=1}^l X_j}{l}$
                \EndFor %median of means improvement
                \State \textbf{return} median$(Y_1,\ldots,Y_v)$
        \EndProcedure
    \end{algorithmic}
\end{algorithm}

Algorithm \ref{alg:estmoments} first computes an $(\eps_1,\delta/2)$-estimate $\wt{W}$ of the sum $W$ of the weights of elements in $A$ using the sum estimation algorithm in \cite{BT2022}\footnote{\cite{BT2022} states the sample complexity for probability of success at least $2/3$. Here, we are stating the bounds for an $(\eps_1,\delta/2)$-estimate. This is obtained using an application of the medians of means technique.}. Let the probability of sampling element $a_j$ using proportional sampling be given as $p_j=\frac{w(a_j)}{W}$. Note that we don't know $p_j$, however we can obtain a good approximation of it as follows. For a proportional sample $a_j$, we know its weight $w(a_j)$, and $\wt{W}$ gives us an approximation of $W$. Using this we get a approximation for $p_j$ as $\ti{p}_j=\frac{w(a_j)}{\wt{W}}$. Let $\E$ denote the event that $\wt{W}\in [(1-\eps_1)W,(1+\eps_1)W]$. In what follows we condition on event $\E$.

\begin{claim}\label{claim:prob} Conditioned on $\E$, for any $j\in [n]$, we have $\frac{p_j}{1+\eps_1}\leq \ti{p}_j\leq \frac{p_j}{1-\eps_1}$. \end{claim}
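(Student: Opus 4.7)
The claim is essentially an algebraic restatement of the definition of $\ti{p}_j$ combined with the two-sided bound on $\wt{W}$ guaranteed by $\E$, so the plan is short. My approach will simply be to substitute the definitions and manipulate the inequalities; there is no real technical obstacle.

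First, I would recall the relevant definitions: $p_j = w(a_j)/W$ and $\ti{p}_j = w(a_j)/\wt{W}$, and note that since weights are nonnegative and we are conditioning on $\E$, both denominators are strictly positive (assuming $w(a_j) > 0$; the zero-weight case is trivial since it would force $p_j = \ti{p}_j = 0$). Under event $\E$ we have the sandwich $(1-\eps_1)W \le \wt{W} \le (1+\eps_1)W$.

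Next, I would invert this two-sided bound to get $\frac{1}{(1+\eps_1)W} \le \frac{1}{\wt{W}} \le \frac{1}{(1-\eps_1)W}$, which is valid because $\eps_1 \in (0,1)$ so $(1-\eps_1)W > 0$. Multiplying the entire chain of inequalities by $w(a_j) \ge 0$ preserves the direction and gives
\[
\frac{w(a_j)}{(1+\eps_1)W} \;\le\; \frac{w(a_j)}{\wt{W}} \;\le\; \frac{w(a_j)}{(1-\eps_1)W}.
\]
Finally, I would rewrite the middle term as $\ti{p}_j$ and the outer terms as $p_j/(1+\eps_1)$ and $p_j/(1-\eps_1)$ respectively, giving the claimed inequality $\frac{p_j}{1+\eps_1} \le \ti{p}_j \le \frac{p_j}{1-\eps_1}$.

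There is no meaningful obstacle here; the only thing to be careful about is the direction of the inequality when inverting, which is handled by the positivity of $(1-\eps_1)W$ under the assumption $\eps_1 \in (0,1)$. The claim is purely a deterministic consequence of the event $\E$ and will be used later to argue that $X_j = w(a_j)^t/\ti{p}_j$ is close to the unbiased estimator $w(a_j)^t/p_j$.
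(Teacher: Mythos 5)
Your proof is correct and follows the same route as the paper's (one-line) argument: invert the two-sided bound $(1-\eps_1)W \le \wt{W} \le (1+\eps_1)W$ guaranteed by $\E$ and multiply through by $w(a_j)$. The extra care you take with positivity and the zero-weight case is fine but not a departure from the paper's reasoning.
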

\begin{proof} Conditioned on $\E$, $(1-\eps_1)W\leq \wt{W}\leq (1+\eps_1)W$. The above inequality follows. \end{proof}

Conditioned on $\E$, for any $j$, we have $\frac{p_j}{1+\eps_1}\leq \ti{p}_j\leq \frac{p_j}{1-\eps_1}$. Given a proportional sample $a_j$, we define a random variable $X_j$ with value $\frac{w(a_j)^t}{\ti{p}_j}$. Then, we have $(1-\eps_1)S_t\leq \EE[X_j]\leq (1+\eps_1)S_t$. Here, $X_j$ is not an unbiased estimator of $S_t$. Next, we bound the variance of this estimator given as $\var[X_j]=\EE[X_j^2]-\EE^2[X_j]\leq \EE[X_j^2]$. 

\begin{align}
\EE[X_j^2]
& = \sum_{a_j\in A} \frac{w(a_j)^{2t}}{\ti{p}_j^2} p_j \nonumber\\
& \leq (1+\eps_1)^2 \sum_{a_j\in A} \frac{w(a_j)^{2t}}{p_j} \nonumber\\
& = (1+\eps_1)^2 \cdot W \cdot \sum_{a_j\in A} \frac{w(a_j)^{2t}}{w(a_j)} && \text{(Substituting $p_j$ with $\frac{w(a_j)}{W}$)} \nonumber\\
& = (1+\eps_1)^2 \cdot W \cdot \sum_{a_j\in A} w(a_j)^{2t-1} \label{eqn:upper-sample}
\end{align}

Let us obtain $l$ independent samples using proportional sampling and let these random variables be $X_1,\ldots,X_l$. Let $X=\frac{1}{l} \sum_{j=1}^l X_j$. We have $(1-\eps_1)S_t\leq \EE[X]=\EE[X_j]\leq (1+\eps_1)S_t$, and $\var[X]=\frac{\var[X_j]}{l}$. Using Chebyshev's inequality, we have $\Pr[|X-S_t|>\eps S_t]\leq \Pr[|X-\EE[X]|>(\eps-\eps_1) S_t]]\leq \frac{\var[X]}{(\eps-\eps_1)^2 S_t^2}$. For appropriate choice of parameter $l$, we show this probability to be at most a small constant using the following claim. 
%bound this probability to be at most $\frac{(1+\eps_1)^2}{l(\eps-\eps_1)^2} \cdot n^{1-1/t}$ in the following claim whose proof is given in Appendix \ref{sec:omit}.

%\begin{claim} $\frac{\var[X]}{(\eps-\eps_1)^2 S_t^2}\leq \frac{(1+\eps_1)^2}{l(\eps-\eps_1)^2} \cdot n^{1-1/t}$. \end{claim}

\begin{claim} $\frac{\var[X]}{(\eps-\eps_1)^2 S_t^2}\leq \frac{(1+\eps_1)^2}{l(\eps-\eps_1)^2} \cdot n^{1-1/t}$. \end{claim}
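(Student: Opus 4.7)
The plan is to reduce the claim, via the variance bound already established in the preceding display, to a purely deterministic inequality on weight vectors, and then dispatch that inequality using Fact \ref{fact:norms} twice. Concretely, since $\var[X] = \var[X_j]/l \leq \EE[X_j^2]/l$ and equation~(\ref{eqn:upper-sample}) gives
$$\EE[X_j^2] \leq (1+\eps_1)^2 \cdot W \cdot \sum_{a_j\in A} w(a_j)^{2t-1},$$
the claim will follow once I show
$$W \cdot \sum_{a_j\in A} w(a_j)^{2t-1} \leq n^{1-1/t}\, S_t^{2}. \qquad (\ast)$$

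To establish $(\ast)$, I would view $w = (w(a_1),\ldots,w(a_n))$ as a vector in $\R^n_{\geq 0}$, so that $W = \|w\|_1$, $S_t = \|w\|_t^{t}$, and $\sum w(a_j)^{2t-1} = \|w\|_{2t-1}^{2t-1}$. Since the hypothesis is $t > 1$, we have $1 < t < 2t-1$, so Fact~\ref{fact:norms} is applicable twice. First, applying it with $r = t$ and $p = 2t-1$ gives $\|w\|_{2t-1} \leq \|w\|_t$; raising to the $(2t-1)$-th power yields
$$\sum_{a_j \in A} w(a_j)^{2t-1} \;\leq\; \|w\|_t^{\,2t-1} \;=\; S_t^{(2t-1)/t}.$$
Second, applying it with $r = 1$ and $p = t$ gives $W = \|w\|_1 \leq n^{1-1/t} \|w\|_t = n^{1-1/t} S_t^{1/t}$.

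Multiplying these two bounds, the exponents on $S_t$ telescope cleanly: $\tfrac{1}{t} + \tfrac{2t-1}{t} = 2$, so
$$W \cdot \sum_{a_j\in A} w(a_j)^{2t-1} \;\leq\; n^{1-1/t}\, S_t^{1/t} \cdot S_t^{(2t-1)/t} \;=\; n^{1-1/t}\, S_t^{2},$$
which is exactly $(\ast)$. Plugging this back into $\var[X] \leq (1+\eps_1)^2 W \sum w(a_j)^{2t-1}/l$ and dividing by $(\eps - \eps_1)^2 S_t^2$ yields the stated bound. I do not anticipate a genuine obstacle: the only subtlety to flag is the range of validity of the two norm inequalities, both of which require $t>1$, which is the standing assumption of the theorem being analyzed; the claim as stated does not extend verbatim to $1/2 < t \leq 1$ and will need a separate argument in that regime.
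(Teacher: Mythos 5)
Your proposal is correct and follows essentially the same route as the paper: reduce to bounding $W\cdot\sum_{a}w(a)^{2t-1}/S_t^2$ via Equation~(\ref{eqn:upper-sample}), then apply Fact~\ref{fact:norms} twice, once as $\|w\|_{2t-1}\leq\|w\|_t$ and once as $\|w\|_1\leq n^{1-1/t}\|w\|_t$. The only cosmetic difference is that you multiply the two bounds and let the exponents on $S_t$ telescope, while the paper simplifies the ratio down to $\|w\|_1/\|w\|_t$ before invoking the second inequality; your closing remark that the argument needs $t>1$ and a separate treatment for $1/2<t<1$ is also consistent with what the paper does in Section~\ref{sec:small-t}.
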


\begin{proof}
\begin{align*}
& \frac{\var[X]}{(\eps-\eps_1)^2 S_t^2} \\
\leq &\frac{\EE[X_j^2]}{l(\eps-\eps_1)^2 \cdot S_t^2} \\
 \leq &\frac{(1+\eps_1)^2}{l(\eps-\eps_1)^2} \cdot \frac{W \cdot \sum_{a_j\in A} w(a_j)^{2t-1}}{S_t^2} && \text{(Using Equation (\ref{eqn:upper-sample}))}\\
 = & \frac{(1+\eps_1)^2}{l(\eps-\eps_1)^2} \cdot \frac{W \cdot ||w(A)||_{2t-1}^{2t-1}}{S_t^2} && \text{(vector $w(A)$ has length $n$)} \\
 \leq &\frac{(1+\eps_1)^2}{l(\eps-\eps_1)^2} \cdot \frac{W \cdot {(||w(A)||_t^t)}^{2-1/t}}{S_t^2} && \text{(Using Fact~\ref{fact:norms}, $||w(A)||_{2t-1}\leq ||w(A)||_t$)} \\
 = &\frac{(1+\eps_1)^2}{l(\eps-\eps_1)^2} \cdot \frac{||w(A)||_1 \cdot (||w(A)||_t^t)^{2-1/t}}{(||w(A)||_t^t)^2}\\
 = &\frac{(1+\eps_1)^2}{l(\eps-\eps_1)^2} \cdot \frac{||w(A)||_1}{||w(A)||_t}\\
 \leq &\frac{(1+\eps_1)^2}{l(\eps-\eps_1)^2} \cdot n^{1-1/t} && \text{(Using Fact~\ref{fact:norms}, $||w(A)||_1\leq n^{1-1/t} ||w(A)||_t$)}
\end{align*}
\end{proof}

Let $\eps_1=\eps/2$. For $l=48n^{1-1/t}/\eps^2$, this failure probability is at most $1/3$. Using the standard median trick, we show that for $v=48\log 2/\delta$, this failure probability can be reduced to be at most $\delta/2$. Let us define independent Bernoulli random variables $Z_1,\ldots,Z_v$ such that $\Pr[Z_i=1]=2/3$ for all $i$. Let $Z=\sum_{r=1}^v Z_r$. Now, conditioned on $\E$, the probability that the output of Algorithm \ref{alg:estmoments} does not lie in the interval $[(1-\eps)S_t,(1+\eps)S_t]$ is the same as the probability that the median of $Y_1,\ldots,Y_v$ lies outside the interval $[(1-\eps)S_t,(1+\eps)S_t]$. This probability is at most $\Pr[Z<v/2]$. Using a standard application of a Chernoff bound, given in Lemma \ref{lem:chernoff}, we have $\Pr[Z<v/2]\leq \delta/2$.

\paragraph{Correctness and Sample complexity bounds}

We need to show that Algorithm \ref{alg:estmoments} returns an estimate $ALG(A,t,\eps,\delta)$ for which with probability at least $1-\delta$, we have $(1-\eps)S_t\leq ALG(A,t,\eps,\delta)\leq (1+\eps)S_t$. Step $2$ of Algorithm \ref{alg:estmoments} uses $\Theta{(\frac{\sqrt{n}\log (2/\delta)}{\eps_1})}$ proportional samples to obtain an $(\eps_1,\delta/2)$ estimate $\wt{W}$ of $W$. Algorithm \ref{alg:estmoments} fails if either $\E$ does not hold or the estimate in Step $10$ is incorrect. Since both of these failure probabilities are at most $\delta/2$, the algorithm succeeds with probability at least $1-\delta$. 

Algorithm \ref{alg:estmoments} uses $\Theta{(\frac{\sqrt{n}\log (2/\delta)}{\eps_1})}$ proportional samples in Step 2 and uses $O(\frac{n^{1-1/t}\log 1/\delta}{\eps^2})$ proportional samples in Steps 3 and 4. Therefore, the required number of proportional samples is $O(\frac{\sqrt{n}\log 1/\delta}{\eps_1} + \frac{n^{1-1/t}\log 1/\delta}{\eps^2})$. For $\eps_1=\eps/2$, this gives $O(\frac{\sqrt{n}\log 1/\delta}{\eps} + \frac{n^{1-1/t}\log 1/\delta}{\eps^2})$.

\section{Lower bound for Moment Estimation using Proportional Sampling}\label{sec:lower-proportional}

We use Yao's minimax lemma to prove the sample complexity lower bound for obtaining an $(\eps,\delta)$ estimate of $S_t$ using any randomized algorithm. We construct two families of instances on which $S_t$ differs by at least a $(1\pm \eps)$-factor and show that it is hard to distinguish these two instances using a small number of proportional samples.

Our lower bound constructions are as follows. There are $n_1$ elements of weight $d_1$ and $n_2$ elements of weight $d_2$ in both the instances, where the values of the parameters $n_1, n_2$ and $d_1$ are the same in both instances, and the instances differ in the value of parameter $d_2$. The exact values of these parameters will be set below. In one instance we set $d_2=n$, where as for the other instance $d_2=0$. These choices for parameter values creates a gap of a multiplicative $(1\pm \eps)$ factor between the $S_t$ values of the two instances. One can differentiate these two instances only when an element of weight $d_2=n$ is sampled using proportional sampling, and we show that this requires a lot of samples. 

\begin{thm} For any $\eps,\delta \in (0,1)$ and $t>1$, any randomized algorithm that computes an $(\eps,\delta)$-estimate of $S_t$ requires $\Omega(\frac{n^{1-1/t}\ln 1/\delta }{\eps^2})$ proportional samples. \end{thm}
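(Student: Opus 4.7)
My plan is to apply Yao's minimax principle, reducing the randomized lower bound to the task of exhibiting two hard instances $\II_1,\II_2$ whose $t$-th moments differ by a multiplicative $(1+\Theta(\eps))$ factor, while no algorithm reading fewer than $\Omega(n^{1-1/t}\ln(1/\delta)/\eps^2)$ proportional samples can distinguish them with probability at least $1-2\delta$. Since the claimed ranges of $\II_1$- and $\II_2$-outputs for a correct $(\eps,\delta)$-estimator are disjoint whenever $S_t(\II_1)\ge (1+4\eps)S_t(\II_2)$, such an indistinguishability statement immediately implies the sample complexity lower bound. Following the construction sketched above the theorem, I take both instances to share an identical ``light'' block of $n_1$ elements of common weight $d_1$ and to differ only in a ``heavy'' block of $n_2$ elements: weight $N$ in $\II_1$ and weight $0$ in $\II_2$ (so effectively absent from the proportional sampler). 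The parameters $n_1,n_2,d_1,N$ are then tuned so that $n_2 N^t = \Theta(\eps\cdot n_1 d_1^t)$, producing the moment gap, while $n_1 d_1 \gg n_2 N$, keeping $W_1\approx W_2$ and reducing the hypothesis test to ``did any sample have weight $N$?''.

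The quantitative step is a one-sided hypothesis test. Every proportional sample from $\II_2$ has weight $d_1$, while in $\II_1$ each sample is heavy with probability $p_H = n_2 N / W_1 \approx n_2 N /(n_1 d_1)$. The event ``no heavy element is seen in $s$ independent draws'' has probability $(1-p_H)^s \le e^{-s p_H}$ under $\II_1$ and probability $1$ under $\II_2$; an $(\eps,\delta)$-estimator that errs with probability at most $\delta$ on each input must therefore satisfy $e^{-s p_H}\le \delta$, forcing $s \ge \Omega(\ln(1/\delta)/p_H)$. This immediately captures the correct $\ln(1/\delta)$ factor in the target bound, mirroring the analogous Chernoff-type argument used in the lower bound of Aliakbarpour~\etal.

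The remaining work, and the main technical obstacle, is to push the choice of $(n_2,N)$ so that $1/p_H$ matches the full $n^{1-1/t}/\eps^2$. A bare-bones single-heavy instantiation $(n_2=1,\ N=\Theta((\eps n)^{1/t}))$ only yields $1/p_H = \Theta(n^{1-1/t}/\eps^{1/t})$, which is weaker than claimed. To extract the sharper $\eps^{-2}$ I expect to replace the simple two-point Le Cam test by a KL-divergence argument against a companion instance $\II_1'$ that also contains heavy elements of the same weight $N$ but with a slightly different count $n_2'$, chosen so that $|n_2-n_2'|N^t = \Theta(\eps n_1 d_1^t)$. The per-sample KL between the two proportional-sample distributions then scales like $(\Delta p_H)^2/p_H$, and optimizing $(n_2,N)$ subject to the gap constraint, the regime $n_1 d_1 \gg n_2 N$, and the integer constraint on the number of heavies, produces the $\eps^{-2}$ factor; standard medians-of-estimators / Chernoff arguments convert the constant-probability bound into the $\ln(1/\delta)$ version. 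Managing this parameter optimization carefully, particularly the integer constraint that forces a regime change once $\eps$ becomes small compared to $n^{-(t-1)/t}$, is the most delicate step in carrying out the plan.
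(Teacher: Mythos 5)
Your framework---Yao's principle, two-block instances sharing a light block and differing in a heavy block, and the geometric ``must see a heavy element'' argument that supplies the $\ln(1/\delta)$ factor---matches the paper's, and your observation that the single-heavy instantiation only yields $1/p_H=\Theta(n^{1-1/t}/\eps^{1/t})$ is correct. The genuine gap is the step you defer: the $\eps^{-2}$ factor is asserted as the outcome of an optimization you never carry out, and if one does carry it out, the family you propose does not deliver it. Concretely, take a light block of $n_1$ elements of weight $1$ and heavy blocks of $n_2$ versus $n_2+\Delta$ elements of weight $N$ with $n_2N\le n_1$. The per-sample KL between the two proportional-sample distributions is $\Theta\bigl((\Delta N/n_1)^2\cdot n_1/(n_2N)\bigr)=\Theta(\Delta^2N/(n_1n_2))$, so the test needs $m\gtrsim n_1n_2/(\Delta^2N)$ samples. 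The moment-gap requirement forces $\Delta N^t\ge \eps(n_1+n_2N^t)$, hence $N\ge(\eps n_1/\Delta)^{1/t}$ and $n_2\le \Delta/\eps$; substituting the extremal choices and then $\Delta=1$ (the integrality constraint you flag yourself) gives $m=O\bigl(n_1^{1-1/t}/\eps^{1+1/t}\bigr)$, which is strictly weaker than $n^{1-1/t}/\eps^2$ for every $t>1$ and degrades to $n/\eps$ as $t\to\infty$. So the ``regime change'' you mention is not a small-$\eps$ corner case: the integer constraint on the heavy count caps this entire two-count family at $\eps^{-(1+1/t)}$, and the proposal as written does not prove the theorem.

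For comparison, the paper never leaves the pure detection argument: it takes $n_2=n\eps^{\frac{2t-1}{t-1}}/(n+\eps^{\frac{2t-1}{t-1}})$ heavy elements of weight $d_2=n$ against $n_1$ light elements of weight $d_1=n^{1-1/t}\eps^{1/(t-1)}$, so that $n_2d_2^t=\eps\, n_1d_1^t$ (the $(1+\eps)$ moment gap) while $n_2d_2/(n_1d_1)=\eps^2/n^{1-1/t}$ (the per-sample detection probability), and the $\Omega(\ln(1/\delta)/p)$ geometric bound finishes the proof. You should note, however, that this $n_2$ is $\Theta(\eps^{\frac{2t-1}{t-1}})<1$ for all $\eps<1$, i.e., the paper's construction uses a sub-unit number of heavy elements---exactly the integrality obstruction you identified. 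Your instinct about where the difficulty sits is therefore sound, but your proposal neither reproduces the paper's detection-based route to $\eps^{-2}$ nor supplies a valid alternative that reaches it; to complete a proof you would need either to justify a construction with $n_2\ge 1$ achieving detection probability $\eps^2/n^{1-1/t}$ under the $(1+\eps)$-gap constraint, or a genuinely different information-theoretic argument.
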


\begin{proof} We construct two families of instances which we show are hard to be distinguished using a few proportional samples by Yao's lemma. In the first instance we have $n_1$ elements with weight $d_1$ and $n_2$ elements of weight $0$, and in the second instance, there are $n_1$ elements of weight $d_1$ and $n_2$ elements of weight $d_2$, where the following values are used. The parameter values used in the lower bound constructions of the two instances are given as follows. 

%\begin{tabular}{p{8cm}|p{8cm}}
%Instance $1$ & Instance $2$\\ 
%$n_1  =  \frac{n^2}{n + \eps^{\frac{2t-1}{t-1}}}$ & $n_1  =  \frac{n^2}{n+ \eps^{\frac{2t-1}{t-1}}}$\\
%$d_1  =  n^{1-1/t} \eps^{1/(t-1)}$ & $d_1 =  n^{1-1/t} \eps^{1/(t-1)}$\\
%$n_2  =  \frac{n \eps^{\frac{2t-1}{t-1}}}{n + \eps^{\frac{2t-1}{t-1}}}$ & $n_2  =  \frac{n \eps^{\frac{2t-1}{t-1}}}{n + \eps^{\frac{2t-1}{t-1}}}$\\
%$d_2  =  0$ & $d_2 = n$
%\end{tabular}

\begin{gather}
\begin{align*}
\qquad \qquad \quad n_1 & =  \frac{n^2}{n + \eps^{\frac{2t-1}{t-1}}} & n_1 & =  \frac{n^2}{n+ \eps^{\frac{2t-1}{t-1}}}\\
\qquad \qquad \quad d_1 & =  n^{1-1/t} \eps^{1/(t-1)} & d_1 & =  n^{1-1/t} \eps^{1/(t-1)}\\
\qquad \qquad \quad n_2 & =  \frac{n \eps^{\frac{2t-1}{t-1}}}{n + \eps^{\frac{2t-1}{t-1}}} & n_2 & =  \frac{n \eps^{\frac{2t-1}{t-1}}}{n + \eps^{\frac{2t-1}{t-1}}}\\
\qquad \qquad \quad d_2 & =  0 & d_2 & = n
\end{align*}
\end{gather}

\noindent The $S_t$ value for the first instance is given as follows.
\begin{align*}
n_1 \cdot d_1^t + n_2 \cdot 0
& = \frac{n^2}{n + \eps^{\frac{2t-1}{t-1}}} \cdot n^{t-1} \eps^{t/(t-1)}\\
& = \frac{n^{t+1} \eps^{t/(t-1)}}{n + \eps^{\frac{2t-1}{t-1}}} 
\end{align*}

\noindent The $S_t$ value for the second instance is 
\begin{align*}
n_1\cdot d_1^t + n_2 \cdot d_2^t 
& = \frac{n^{t+1} \eps^{t/(t-1)}}{n + \eps^{\frac{2t-1}{t-1}}} +  \frac{n \eps^{\frac{2t-1}{t-1}}}{n + \eps^{\frac{2t-1}{t-1}}} \cdot n^t\\
& = \frac{n^{t+1} \eps^{t/(t-1)}}{n + \eps^{\frac{2t-1}{t-1}}} + \eps \frac{n^{t+1} \eps^{t/(t-1)}}{n + \eps^{\frac{2t-1}{t-1}}}\\
& = (1+\eps) \frac{n^{t+1} \eps^{t/(t-1)}}{n +\eps^{\frac{2t-1}{t-1}}}
\end{align*}

The above two instances differ in their $S_t$ values by a multiplicative factor of $(1+\eps)$. In order to distinguish these two instances, one is required to sample an element of weight $d_2=n$. Using proportional sampling, the probability of sampling an element of weight $n$ in the above instance is given to be at least

\begin{align*}
\frac{n_2 d_2}{n_2 d_2 + n_1 d_1}
& = \frac{\frac{n \eps^{\frac{2t-1}{t-1}}}{n + \eps^{\frac{2t-1}{t-1}}} \cdot n} {\frac{n \eps^{\frac{2t-1}{t-1}}}{n + \eps^{\frac{2t-1}{t-1}}} \cdot n + \frac{n^2 }{n + \eps^{\frac{2t-1}{t-1}}} \cdot n^{1-1/t} \eps^{1/(t-1)}}\\
& =  \frac{n^2 \eps^{\frac{2t-1}{t-1}}}{n^2 \eps^{\frac{2t-1}{t-1}} + n^2 \cdot n^{1-1/t} \cdot \eps^{\frac{1}{t-1}}}\\
& =  \frac{1}{1+\frac{n^{1-1/t}}{\eps^2}}
\end{align*}

Let $p=\frac{1}{1+\frac{n^{1-1/t}}{\eps^2}}$. The lower bound on the sample complexity for this instance distinguishing problem is given as the number of samples required to observe a \textit{success} with probability at least $(1-\delta)$ while drawing independent samples from $Geom(p)$. Here, $Geom(p)$ denotes a geometric distribution with success probability $p$. The number of samples required to observe one \textit{success} from $Geom(p)$ with probability at least $(1-\delta)$ is at least $\Omega(\frac{\ln 1/\delta}{p})$. Therefore, $\Omega(\frac{n^{1-1/t}\ln1/\delta}{\eps^2})$ samples are required to distinguish these two instances with probability at least $1-\delta$.

\end{proof}

\section{Estimation of Moments using Hybrid Sampling}\label{sec:lower-hybrid}

In this section we prove a lower bound showing that for the moment estimation problem, the hybrid sampling framework does not provide any significant advantage over access to just the proportional sampling oracle. In contrast, note that for the sum estimation problem, hybrid sampling-based algorithms in fact give much better sample complexity bounds over proportional sampling \cite{MPX2007,BT2022}. We prove the following result.

\begin{thm} For any $\eps,\delta>0$ and $t>1$, any algorithm having access to a hybrid sampling oracle requires to make at least $\Omega(\frac{n^{1-1/t}\ln 1/\delta}{\eps^2})$ queries to compute an $(\eps,\delta)$-estimate for $S_t$. \end{thm}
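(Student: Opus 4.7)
The plan is to reuse the two instances $\II_1,\II_2$ constructed in the previous (proportional-only) lower bound and show that access to a uniform sampling oracle in addition to the proportional one does not asymptotically reduce the number of queries required. Recall that both instances contain $n_1$ elements of weight $d_1 = n^{1-1/t}\eps^{1/(t-1)}$; the remaining $n_2$ elements have weight $0$ in $\II_1$ and weight $d_2=n$ in $\II_2$, where $n_1,n_2$ are chosen so that $S_t(\II_2)/S_t(\II_1)=1+\eps$. By Yao's minimax principle applied to the uniform prior over $\{\II_1,\II_2\}$, it suffices to show that no deterministic adaptive hybrid algorithm using $o(n^{1-1/t}\ln(1/\delta)/\eps^2)$ queries can produce an output whose distributions under $\II_1$ and $\II_2$ have total variation $1-o(1)$.

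The crux is to bound the per-query \emph{distinguishing probability} under each query type. A proportional sample reveals the instance only when it returns a weight $\neq d_1$: in $\II_1$ this has probability $0$ (a weight-$0$ element is never drawn proportionally), while in $\II_2$ it has probability $p_p := n_2 d_2/(n_1 d_1 + n_2 d_2) = \Theta(\eps^2/n^{1-1/t})$, as already computed. A uniform sample reveals the instance whenever it returns a non-$d_1$ weight (either $0$ in $\II_1$ or $n$ in $\II_2$), and this happens with probability $p_u := n_2/n = \Theta(\eps^{(2t-1)/(t-1)}/n)$ under either instance. A direct algebraic comparison gives
\[
\frac{p_u}{p_p} \;=\; \Theta\!\left(\frac{\eps^{1/(t-1)}}{n^{1/t}}\right) \;\leq\; 1
\]
for all $\eps \in (0,1)$ and $t>1$ (since $n\geq 1$), so $p_{\max}:=\max\{p_u,p_p\}=\Theta(p_p)$.

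Next I would handle the adaptivity via a coupling: before any distinguishing event has occurred, the transcripts under $\II_1$ and $\II_2$ are identically distributed, so the algorithm's choice of query type at each step is a deterministic function of a history that carries no information about the instance. Consequently, the probability that no distinguishing event occurs in $q$ hybrid queries is at least $(1-p_{\max})^q$ in $\II_2$, and the total variation distance between the transcript distributions (and hence between the output distributions) is at most $1-(1-p_{\max})^q$. Since the two instances differ in $S_t$ by a factor of $1+\eps$, the expected success of the algorithm under the uniform prior is at most $\bigl(1 + (1-(1-p_{\max})^q)\bigr)/2$. Demanding this to be $\geq 1-\delta$ forces $(1-p_{\max})^q \leq 2\delta$, which yields $q = \Omega(\ln(1/\delta)/p_{\max}) = \Omega(n^{1-1/t}\ln(1/\delta)/\eps^2)$.

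The main obstacle I anticipate is the clean treatment of adaptivity together with mixed query types, since the algorithm may interleave uniform and proportional queries arbitrarily based on the history of responses. The coupling observation above is what resolves this: prior to any distinguishing event, the history is uninformative about the instance, so the adversary's worst case reduces to bounding a single scalar $p_{\max}$, which by the calculation above equals (up to constants) the proportional-only per-query distinguishing probability. Thus hybrid access offers no asymptotic improvement over the proportional-only lower bound of the previous section.
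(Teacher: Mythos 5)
Your proposal is correct and follows essentially the same route as the paper: reuse the two instances from the proportional-sampling lower bound, bound the per-query detection probabilities $p_p=\Theta(\eps^2/n^{1-1/t})$ for proportional queries and $p_u=\Theta(\eps^{(2t-1)/(t-1)}/n)$ for uniform queries, check that $p_u\leq p_p$, and conclude a $\Omega(\ln(1/\delta)/p_p)$ bound. Your explicit coupling argument handling adaptive interleaving of the two query types is a welcome bit of rigor that the paper leaves implicit (it simply takes the minimum of the two query-count bounds), but it is not a genuinely different approach.
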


We show that the lower bound instance described in Section \ref{sec:lower-proportional} yields a lower bound for the hybrid sampling as well. In order to distinguish these instances, one is required to sample an element of weight $n$. We have seen that using just proportional sampling $\Omega(\frac{n^{1-1/t} \log 1/\delta}{\eps^2})$ samples are required. The probability of sampling an element of weight $d_2$ using uniform sampling is given as $\frac{n_2}{n_1+n_2}$. This probability using the values of the parameters from Section \ref{sec:lower-proportional} equals $\frac{n_2}{n_1+n_2} = \frac{1}{1+\frac{n}{\eps^{\frac{2t-1}{t-1}}}}$. The instances are distinguished if an element of weight $n$ is sampled using either proportional sampling or uniform sampling. These two probabilities are given as $\frac{1}{1+\frac{n^{1-1/t}}{\eps^2}}$ and $\frac{1}{1+\frac{n}{\eps^{\frac{2t-1}{t-1}}}}$, respectively. Overall, we get a lower bound of $\Omega(\min\{\frac{n^{1-1/t}}{\eps^2}, \frac{n}{\eps^{\frac{2t-1}{t-1}}}\}\ln 1/\delta)=\Omega(\frac{n^{1-1/t}\ln 1/\delta}{\eps^2})$ for the $(\eps,\delta)$ moment estimation problem using hybrid sampling.

% !TEX root = paper.tex
\section{Characterization of Sample Complexity}\label{sec:characterize}

We define a {\it moment-density} parameter of the input that governs the sample complexity of the moment estimation problem using proportional sampling. For $S_t=\sum_{a\in A} w(a)^t$ and $W=\sum_{a\in A} w(a)$, we define the moment-density parameter as $$\rho=\max_{L\subseteq A} \frac{\frac{\sum_{a\in L} w(a)^t}{S_t}}{\frac{\sum_{a\in L}w(a)}{W}}=\max_{L\subseteq A} \frac{\sum_{a\in L} w(a)^t}{\sum_{a\in L} w(a)} \cdot \frac{W}{S_t}$$ 

We give an upper bound for an $(\eps,\delta)$-estimator of $S_t$ using $O(({\sqrt{n}}/{\eps}+\rho/\eps^2)\ln 1/\delta)$ proportional samples. 

\begin{thm}\label{thm:char-upper} There exists an algorithm $ALG$ that given proportional sampling access to the weights of elements of $A$ having moment-density parameter $\rho$ and parameters $t>1,\eps,\delta\in (0,1)$, provides an $(\eps,\delta)$-estimate of $S_t$ using $O(({\sqrt{n}}/{\eps}+\rho/\eps^2)\ln 1/\delta)$ proportional samples. \end{thm}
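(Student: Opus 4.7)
The plan is to adapt Algorithm~\ref{alg:estmoments} essentially verbatim, changing only the inner loop bound from $l = 48\, n^{1-1/t}/\eps^2$ to $l = \Theta(\rho/\eps^2)$, and to redo the variance computation so that the $n^{1-1/t}$ factor coming from the $\ell_p$-norm inequalities of Fact~\ref{fact:norms} is replaced by $\rho$. First I would use the sum-estimation algorithm of \cite{BT2022} to obtain an $(\eps_1,\delta/2)$-estimate $\wt{W}$ of $W$ with $\eps_1 = \eps/2$, at cost $O(\sqrt{n}\log(1/\delta)/\eps)$ proportional samples. Then I would average $l = \Theta(\rho/\eps^2)$ independent estimators $X_j = w(a_j)^t/\ti{p}_j$, with $\ti{p}_j = w(a_j)/\wt{W}$, into a single estimate $Y$, and take the median of $v = \Theta(\log(1/\delta))$ independent copies of $Y$.

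Conditioning on the event $\E = \{\wt{W} \in [(1-\eps_1)W,(1+\eps_1)W]\}$, Claim~\ref{claim:prob} again gives $(1-\eps_1)S_t \leq \EE[X_j] \leq (1+\eps_1)S_t$, and the same calculation as in Section~\ref{sec:moments} yields
\begin{equation*}
\EE[X_j^2] \leq (1+\eps_1)^2 \cdot W \cdot \sum_{a\in A} w(a)^{2t-1}.
\end{equation*}
The new ingredient is bounding the right-hand side in terms of $\rho$. Applying the definition of $\rho$ to the singleton set $L = \{a\}$ gives $\rho \geq w(a)^{t-1} W/S_t$, so $w(a)^{t-1} \leq \rho S_t/W$ for every $a\in A$. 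Factoring $w(a)^{2t-1} = w(a)^{t-1} \cdot w(a)^t$ and summing over $A$ then gives $\sum_{a\in A} w(a)^{2t-1} \leq (\rho S_t/W)\cdot S_t = \rho S_t^2/W$, hence $\EE[X_j^2] \leq (1+\eps_1)^2 \rho S_t^2$.

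Plugging this into Chebyshev's inequality for $Y - S_t$ (absorbing the bias $|\EE[Y]-S_t| \leq \eps_1 S_t$ into the deviation exactly as in Section~\ref{sec:moments}) shows that $l = \Theta(\rho/\eps^2)$ suffices to bring the per-trial failure probability below $1/3$; the standard Chernoff-based median-of-means boosting, using Lemma~\ref{lem:chernoff} with $v = \Theta(\log(1/\delta))$, then drives the overall failure probability below $\delta/2$. A final union bound over $\E$ and the median step gives total failure at most $\delta$, and the total sample count is $O(\sqrt{n}\log(1/\delta)/\eps + \rho\log(1/\delta)/\eps^2)$, as claimed.

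The only non-routine step is the replacement of $n^{1-1/t}$ by $\rho$ in the variance bound. The main obstacle is recognizing that the chain of $\ell_p$-norm inequalities used in Section~\ref{sec:moments}, which is tight only when all weights are roughly equal, can be short-circuited by the singleton-$L$ inequality $w(a)^{t-1} \leq \rho S_t/W$ implied directly by the definition of $\rho$. This yields a sharp estimate on $\sum_a w(a)^{2t-1}$ whenever no element contributes disproportionately to $S_t$ relative to $W$, and so the sample complexity of proportional sampling gracefully tracks the instance parameter $\rho$.
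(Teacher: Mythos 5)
Your proposal is correct and follows essentially the same route as the paper: run Algorithm~\ref{alg:estmoments} with $l=\Theta(\rho/\eps^2)$ and replace the $\ell_p$-norm chain by the bound $\sum_{a}w(a)^{2t-1}=\sum_a w(a)^{t-1}w(a)^t\leq \max_a w(a)^{t-1}\cdot S_t\leq (\rho S_t/W)\cdot S_t$, which is exactly the singleton-$L$ estimate the paper uses to show $W\sum_a w(a)^{2t-1}/S_t^2\leq\rho$. No substantive differences.
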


\begin{proof} Algorithm \ref{alg:estmoments} gives the above sample complexity bound. We adapt the calculations from Section \ref{sec:moments}. Following Equation \ref{eqn:upper-sample}, we can write the sample complexity of our $(\eps,1/3)$ estimator to be at most $O(\sqrt{n}/\eps + \frac{1}{\eps^2} \frac{W\cdot \sum_{a\in A} w(a)^{2t-1}}{S_t^2})$. We will show that $\frac{W\cdot \sum_{a\in A} w(a)^{2t-1}}{S_t^2} \leq \rho$.  
\begin{align*}
\frac{W\cdot \sum_{a\in A} w(a)^{2t-1}}{S_t^2}
& = \frac{W}{S_t} \cdot \frac{\sum_{a\in A} w(a)^{2t-1}}{\sum_{a\in A} w(a)^t}\\
%& = \frac{W}{S_t} \cdot \frac{\sum_{a\in A} w(a)^{t-1} \cdot w(a)^t}{\sum_{a\in A} w(a)^{t-1} \cdot w(a)}\\
%& = \frac{W}{S_t} \cdot \frac{\sum_{a\in A} \frac{w(a)^{t-1}}{\sum_{b\in A} w(b)^{t-1}} \cdot w(a)^t}{\sum_{a\in A} \frac{w(a)^{t-1}}{\sum_{b\in A} w(b)^{t-1}} \cdot w(a)}\\
& \leq \frac{W}{S_t} \cdot \max_{a\in A} \frac{w(a)^t}{w(a)}\\
&= \rho
\end{align*}

Therefore, the sample complexity of our $(\eps,1/3)$-estimate of $S_t$ is $O({\sqrt{n}}/{\eps}+\rho/\eps^2)$. The $(\eps,1/3)$-estimate of the moment can be improved to an $(\eps,\delta)$-estimate with a multiplicative $\ln 1/\delta$-factor sample complexity overhead using the standard median trick.
\end{proof}
Next, we show a $\Omega(\frac{\rho \ln 1/\delta}{\eps})$ lower bound on the sample complexity of any algorithm for the $(\eps,\delta)$-moment estimation problem on any instance with the moment-density parameter $\rho$. We construct two families of input distributions that are hard to be distinguished using a small number of proportional samples. The lower bound instances for the moment estimation using proportional sampling described in Section \ref{sec:lower-proportional} have the moment-density parameter values as $\rho=1$ and $\rho=O(n^{1-1/t}/\eps)$. Next, we show that similar lower bounds hold even when we restrict the input instances to have their moment-density parameter values to be within a constant factor.

\begin{thm}\label{thm:char-lower} For any $\rho$, $\eps,\delta>0$ and $t>1$, any randomized algorithm for an $(\eps,\delta)$-estimate of $S_t$ on an instance with the moment-density parameter $\rho$ requires $\Omega(\frac{\rho \ln 1/\delta}{\eps})$ proportional samples. \end{thm}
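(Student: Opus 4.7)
The plan is to invoke Yao's minimax lemma with a pair of instances analogous to those of Section \ref{sec:lower-proportional}, but with parameters chosen so that the ``hard'' instance has moment-density parameter exactly $\Theta(\rho)$. I would take two instances on $n$ elements: Instance $I_0$ consists of $n_1$ elements of weight $d_1$ (and $n - n_1$ elements of weight $0$); Instance $I_1$ adds $n_2$ elements of weight $d_2 > d_1$. The crucial scaling is $d_2 / d_1 = \rho^{1/(t-1)}$, which tunes the moment-density parameter, together with $n_2 \, d_2^t = \eps \cdot n_1 d_1^t$, which produces a multiplicative gap $(1+\eps)$ between $S_t(I_0)$ and $S_t(I_1)$; parameters such as $d_1 = 1$ and $n_1 = \Theta(n)$ work in a wide range.

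Next I would verify the three required properties. Since $n_2 d_2^t = \eps n_1 d_1^t$ and $n_2 d_2 = \eps n_1 d_1 / \rho$, the light block dominates both $W$ and $S_t$ of $I_1$ whenever $\eps \ll \rho$, so $W/S_t = \Theta(1/d_1^{t-1})$. For the ``heavy'' subset $L^\star = \{a : w(a) = d_2\}$, $\sum_{L^\star} w^t / \sum_{L^\star} w = d_2^{t-1}$, giving $\rho_{L^\star} = d_2^{t-1} \cdot W/S_t = \Theta((d_2/d_1)^{t-1}) = \Theta(\rho)$. To see $L^\star$ is in fact the maximizer, observe that for any subset $L$ mixing light and heavy elements, $\sum_L w^t / \sum_L w$ is a $w$-weighted average of $\{w^{t-1}: a \in L\}$, hence upper bounded by the largest value $d_2^{t-1}$. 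Thus $\rho(I_1) = \Theta(\rho)$. Property (ii) ($(1+\eps)$-gap in $S_t$) is immediate by construction. Property (iii) is that the per-sample probability of drawing a heavy element from $I_1$ is $p = n_2 d_2 / W = \eps / (\rho + \eps) = \Theta(\eps/\rho)$.

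The final step is the distinguishing lower bound. Since $I_0$ has no heavy elements, the two sample distributions can be told apart only by observing at least one heavy element in $I_1$. By the geometric-distribution argument used at the end of the proof in Section \ref{sec:lower-proportional}, seeing at least one such sample with probability $\ge 1-\delta$ requires $\Omega(\log(1/\delta)/p) = \Omega(\rho \log(1/\delta)/\eps)$ samples. Combining with Yao's minimax lemma gives the desired $\Omega(\rho \log(1/\delta)/\eps)$ lower bound for randomized algorithms on instances with moment-density $\Theta(\rho)$.

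The main technical obstacle is carefully handling the endpoints of the admissible range of $\rho$. For $\rho$ close to $1$, the factor $d_2/d_1 = \rho^{1/(t-1)}$ shrinks toward $1$ and the two weight levels collapse, but the bound $\Omega(\log(1/\delta)/\eps)$ is anyway obtainable from trivial information-theoretic reasons. For $\rho$ approaching $n^{1-1/t}$, the feasibility constraint $n_2 \ge 1$ forces $n_1 \gtrsim \rho^{t/(t-1)}/\eps$, which bumps against $n_1 \le n$; here one must either enlarge the heavy element's relative contribution (allowing the heavy block to be a constant rather than $\eps$ fraction of $S_t$, with the gap argument rescaled accordingly) or combine the present bound with the worst-case bound of Section \ref{sec:lower-proportional}. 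A minor secondary issue is rounding $n_1, n_2, d_1, d_2$ to integer (or just real, positive) values, which can be absorbed into the $\Theta$-constants.
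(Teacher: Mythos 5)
Your construction gets the indistinguishability mechanics right, and tuning $d_2/d_1=\rho^{1/(t-1)}$ so as to realize an arbitrary target $\rho$ (rather than only $\rho\approx n^{1-1/t}/(3\eps)$, which is what the paper's own parameter choice pins down) is a genuinely useful refinement. But there is a gap: you verified that $I_1$ has moment-density $\Theta(\rho)$ and never checked $I_0$. Since $I_0$ consists solely of weight-$d_1$ elements (plus zeros), every subset $L$ with $\sum_{a\in L}w(a)>0$ gives $\rho_L=d_1^{t-1}\cdot W/S_t=d_1^{t-1}\cdot d_1^{1-t}=1$, so $\rho(I_0)=1$, not $\Theta(\rho)$. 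The theorem is a lower bound for the promise problem ``estimate $S_t$ on instances whose moment-density parameter is $\rho$'' --- this is precisely why the paper remarks that the pair from Section~\ref{sec:lower-proportional} (which is essentially your pair) has $\rho$-values $1$ and $O(n^{1-1/t}/\eps)$ and must be modified. With your pair, an algorithm that is only required to be correct on $\rho$-instances owes nothing on $I_0$, so the Yao/indistinguishability argument does not yield a contradiction.

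The repair is the one the paper adopts: place heavy elements in \emph{both} instances, differing only in their number by a constant factor. For example, give $I_0$ exactly $n_2/3$ heavy elements and $I_1$ exactly $n_2$ of them, with $n_2d_2^t=3\eps\, n_1d_1^t$, so that the moments differ by a factor $\frac{1+3\eps}{1+\eps}>1+\eps$ while both instances have moment-density $\Theta\bigl((d_2/d_1)^{t-1}\bigr)=\Theta(\rho)$, since in each case the maximizing subset is the set of heavy elements and the light block still dominates $W$ and $S_t$. The distinguishing task then becomes estimating the \emph{rate} of heavy samples rather than detecting one, but the same $\Omega(\ln(1/\delta)/p)$ bound survives: with $o(\ln(1/\delta)/p)$ samples, with probability greater than $\delta$ neither instance yields any heavy sample, in which case the transcripts are identically distributed. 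Your closing remarks on the endpoints of the admissible range of $\rho$ and on integrality are sensible and carry over to the corrected pair unchanged.
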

\begin{proof} 
  Given any $\rho,\eps,\delta$ as input, we construct two families of instances for which their moment-density parameter values $\rho_1,\rho_2=O(\rho)$ and their moment values $S_t$ differ by a $(1\pm \eps)$ factor. There are $n$ elements in both the instances. In the first instance $n_1$ elements will have weight $d_1$ and $n_2$ elements will have value $d_2$. In the second instance there are $n_1$ elements of weight $d_1$, $\frac{n_2}{3}$ elements of weight $d_2$, and $\frac{2n_2}{3}$ elements of weight $0$. We set the values of the parameters $n_1,n_2,d_1,d_2$ as follows:

\begin{gather}
\begin{align*}
\qquad \qquad \quad n_1 & =  \frac{n^2}{n +(3 \eps)^{\frac{2t-1}{t-1}}} & d_1 & =n^{1-1/t}( 3\eps)^{1/(t-1)}\\
\qquad \qquad \quad n_2 & =  \frac{n (3\eps)^{\frac{2t-1}{t-1}}}{n +(3 \eps)^{\frac{2t-1}{t-1}}} & d_2&=n
\end{align*}
\end{gather}

\noindent The moment value for the first instance is:
\begin{align*}
S_t = n_1d_1^t+n_2d_2^t&=\frac{n^2}{n +(3 \eps)^{\frac{2t-1}{t-1}}}\cdot n^{t-1}(3\eps)^{\frac{t}{t-1}}+\frac{n^{t+1} (3\eps)^{\frac{2t-1}{t-1}}}{n +(3 \eps)^{\frac{2t-1}{t-1}}} \\
&= \frac{n^{t+1}(3\eps)^{\frac{t}{t-1}}+n^{t+1}(3\eps)^{\frac{2t-1}{t-1}}}{n +(3 \eps)^{\frac{2t-1}{t-1}}}\\
&=\frac{n^{t+1}(3\eps)^\frac{t}{t-1}}{n+(3\eps)^{\frac{2t-1}{t-1}}}(1+3\eps)
\end{align*}

\noindent The moment value for the second instance is:
\begin{align*}
S_t = n_1d_1^t+\frac{n_2}{3}d_2^t+\frac{2}{3}n_2\cdot0 &=\frac{n^2}{n +(3 \eps)^{\frac{2t-1}{t-1}}}\cdot n^{t-1}(3\eps)^{\frac{t}{t-1}}+\frac{1}{3}\frac{n^{t+1} (3\eps)^{\frac{2t-1}{t-1}}}{n +(3 \eps)^{\frac{2t-1}{t-1}}}\\
&=\frac{n^{t+1}(3\eps)^\frac{t}{t-1}}{n+(3\eps)^{\frac{2t-1}{t-1}}}(1+\frac{3\eps}{3})\\
&=\frac{n^{t+1}(3\eps)^\frac{t}{t-1}}{n+(3\eps)^{\frac{2t-1}{t-1}}}(1+\eps)
\end{align*}

The moment values of the above two instances differ by a factor of $\frac{1+3\eps}{1+\eps}>1+\eps$. Let $\rho_1$ and $\rho_2$ be the moment-density parameters for these two instances, respectively. We compute the $\rho$ values of the instances as follows. 
\begin{align*}
\rho_1 & = \frac{n^t}{n} \cdot \frac{n_1 d_1 + n_2 d_2}{n_1 d_1^t+n_2 d_2^t}\\
& = n^{t-1} \cdot \frac{\frac{n^2}{n +(3 \eps)^{\frac{2t-1}{t-1}}} \cdot n^{1-1/t}( 3\eps)^{1/(t-1)} + \frac{n (3\eps)^{\frac{2t-1}{t-1}}}{n +(3 \eps)^{\frac{2t-1}{t-1}}}\cdot n }{\frac{n^2}{n +(3 \eps)^{\frac{2t-1}{t-1}}} \cdot n^{t-1}( 3\eps)^{t/(t-1)} + \frac{n (3\eps)^{\frac{2t-1}{t-1}}}{n +(3 \eps)^{\frac{2t-1}{t-1}}}\cdot n^t}\\
&=\frac{n^{1-1/t}+(3\eps)^2}{3\eps+(3\eps)^2}=\frac{n^{1-1/t}+9\eps^2}{3\eps+9\eps^2}
\end{align*}

Similarly, we compute $\rho_2$ as
\begin{align*}
\rho_2 & =\frac{n^t}{n} \cdot \frac{n_1 \cdot d_1 + n_2/3 \cdot d_2 + 2n_2/3 \cdot 0}{n_1 \cdot d_1^t + n_2/3 \cdot d_2^t + 2n_2/3 \cdot 0}\\ 
& = \frac{n^{1-1/t}+\frac{1}{3}(3\eps)^2}{3\eps(1+\eps)}= \frac{n^{1-1/t}+3\eps^2}{3\eps+3\eps^2} 
\end{align*}

We have two instances as given above where their $\rho$ values are within a constant factor of each other and the moment values differ by at least a factor of $(1+\eps)$. These instances using proportional sampling remain indistinguishable until an element with weight $d_2$ is sampled. The probability of sampling an element of weight $d_2$ using proportional sampling is at most:
\begin{align*}
\frac{n_2d_2}{n_1d_1+n_2d_2}=\frac{1}{1+\frac{n_1d_1}{n_2d_2}}=\frac{1}{1+\frac{n^2n^{1-1/t}(3\eps)^{1/(t-1)}}{n^2(3\eps)^{\frac{2t-1}{t-1}}}}=\frac{1}{1+\frac{n^{1-1/t}}{9\eps^2}}\leq \frac{1}{1+O(\frac{\rho}{\eps})}
\end{align*} 

Therefore, we require at least $\Omega(\rho/\eps)$ samples to distinguish the two instances with constant probability of success. Any algorithm giving an $(\eps,\delta)$-estimate would need to make $\Omega (\frac{\rho \ln 1/\delta}{\eps})$ proportional samples.
\end{proof}

% !TEX root = paper.tex
\section{Moment estimation for $0<t<1$}\label{sec:small-t}

\subsection{Moment estimation using Proportional Sampling for 1/2<t<1} We use Algorithm \ref{alg:estmoments} to estimate moment $S_t$ for $1/2<t<1$. We adapt the analysis in Section \ref{sec:moments} to prove the following result.

\begin{thm} There exists an algorithm $ALG$ that given proportional sampling access to the weights of the elements of a set $A$ and parameters $1/2<t<1$, $\eps\in (0,1)$, provides an $(\eps,1/3)$-estimate of $S_t$ using $O(\frac{\sqrt{n}}{\eps}+\frac{n^{\frac{1}{t}-1}}{\eps^2})$ samples. \end{thm}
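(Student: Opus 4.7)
The plan is to reuse Algorithm~\ref{alg:estmoments} essentially verbatim, with $\eps_1 = \eps/2$, and to redo only the variance bound, since the direction of Fact~\ref{fact:norms} flips when $2t-1 < t$. As we only need constant success probability, the outer median trick can be dropped and a single inner loop of length $l = \Theta(n^{1/t - 1}/\eps^2)$ suffices, combined with an $(\eps_1, 1/12)$-estimate $\wt W$ of $W$ via the algorithm of \cite{BT2022}, which costs $O(\sqrt{n}/\eps)$ samples. Conditioning on the event $\E$ that $\wt W \in [(1-\eps_1)W,(1+\eps_1)W]$, Claim~\ref{claim:prob} still gives $(1-\eps_1)S_t \le \EE[X_j] \le (1+\eps_1)S_t$ and the bound
\[
\EE[X_j^2] \le (1+\eps_1)^2\, W \sum_{a\in A} w(a)^{2t-1}
\]
from inequality~(\ref{eqn:upper-sample}) carries over without change.

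The new ingredient is the following. For $1/2 < t < 1$ we have $0 < 2t-1 < t$, so Fact~\ref{fact:norms} with $r = 2t-1$ and $p = t$ gives $\|w(A)\|_{2t-1} \le n^{1/(2t-1) - 1/t}\|w(A)\|_t$, and hence
\[
\sum_{a\in A} w(a)^{2t-1} \;=\; \|w(A)\|_{2t-1}^{2t-1} \;\le\; n^{(1-t)/t}\,\|w(A)\|_t^{2t-1} \;=\; n^{1/t - 1}\, S_t^{(2t-1)/t}.
\]
Using Fact~\ref{fact:norms} once more with $r = t$, $p = 1$ (which applies since $t < 1$) yields $W = \|w(A)\|_1 \le \|w(A)\|_t = S_t^{1/t}$, so $W/S_t^{1/t} \le 1$. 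Plugging these into Chebyshev's inequality as in Section~\ref{sec:moments},
\[
\frac{\var[X]}{(\eps - \eps_1)^2 S_t^2} \;\le\; \frac{(1+\eps_1)^2}{l(\eps-\eps_1)^2} \cdot \frac{W}{S_t^{1/t}} \cdot n^{1/t - 1} \;\le\; \frac{(1+\eps_1)^2}{l(\eps-\eps_1)^2}\, n^{1/t - 1}.
\]

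With $l = \Theta(n^{1/t-1}/\eps^2)$ and $\eps_1 = \eps/2$, this probability is below a small constant, and a union bound with $\Pr[\bar\E]$ gives overall success probability at least $2/3$, as required for an $(\eps,1/3)$-estimate. The total sample complexity is $O(\sqrt{n}/\eps)$ from the initial invocation of \cite{BT2022} plus $O(n^{1/t-1}/\eps^2)$ from the inner loop, matching the theorem.

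The only subtle point, and essentially the reason the theorem is stated for $t > 1/2$ rather than $t \ge 1/2$, is that the step bounding $\sum_a w(a)^{2t-1}$ by the $t$-norm requires $2t - 1 > 0$; otherwise small weights dominate the sum and one loses control of the variance. This matches the lower-bound regime for $t \le 1/2$, where no sublinear algorithm can exist. I expect the rest of the proof to be essentially a direct transcription of Section~\ref{sec:moments}, so there is no real technical obstacle beyond bookkeeping the flipped norm inequality.
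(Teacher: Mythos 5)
Your proposal is correct and follows essentially the same route as the paper: reuse Algorithm~\ref{alg:estmoments} with $\eps_1=\eps/2$, keep the bound $\EE[X_j^2]\le (1+\eps_1)^2 W\sum_a w(a)^{2t-1}$, and redo only the norm comparison, applying Fact~\ref{fact:norms} with $r=2t-1<p=t$ to get $\|w(A)\|_{2t-1}^{2t-1}\le n^{1/t-1}\|w(A)\|_t^{2t-1}$ and with $r=t<p=1$ to get $W\le \|w(A)\|_t$, yielding $\frac{W\sum_a w(a)^{2t-1}}{S_t^2}\le n^{1/t-1}$ exactly as the paper does. Your observation that the argument breaks down at $t\le 1/2$ because one needs $2t-1>0$ also matches the paper's framing.
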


The sample complexity of the algorithm for $1/2<t<1$ is given as follows. Note that the sample complexity of our algorithm is given as $\frac{W (||w(A)||_{2t-1})^{2t-1}}{\eps^2 (||w(A)||_t)^{2t}}$. We can rewrite it as $\frac{||w(A)||_1}{\eps^2 ||w(A)||_t} \cdot (\frac{||w(A)||_{2t-1}}{||w(A)||_t})^{2t-1}$. Using Fact \ref{fact:norms}, this is upper bounded as $n^{\frac{1}{t}-1}/\eps^2$. Overall sample complexity of the $(\eps,1/3)$-estimator of the moment is $O(\sqrt{n}/\eps+ n^{\frac{1}{t}-1}/\eps^2)$. Using the median trick we transform the $(\eps,1/3)$-estimator to an $(\eps,\delta)$-estimator with sample complexity $O((\sqrt{n}/\eps+ n^{\frac{1}{t}-1}/\eps^2) \ln 1/\delta)$.

\subsection{Lower bound for $t\leq 1/2$} We show that there are no sublinear algorithms for the problem when $t\leq 1/2$.

\begin{thm} For any $\eps>0$ and $t\leq 1/2$, any randomized algorithm that computes an $(\eps,1/3)$-estimate of $S_t$ requires $\Omega(n)$ proportional samples. \end{thm}
\begin{proof} We construct two families of instances which we show are hard to be distinguished using proportional sampling. In one instance we have one element with weight $(n-1)$, and the rest $(n-1)$ elements with weight $0$. In another instance, we have one element of weight $(n-1)$, and $(n-1)$ elements have weight $\frac{\eps^{1/t}}{n-1}$. The moment values of the two instances are given as $(n-1)^t$ and $(n-1)^t + \eps (n-1)^{1-t}\geq (1+\eps) (n-1)^t$ as $t\leq 1/2$. Hence, there exists a $(1+\eps)$-multiplicative gap between the moment values of these two instances.

These two instances are distinguished using proportional sampling once we sample an element of weight $\frac{\eps^{1/t}}{n-1}$. The probability of sampling such an element is at most $\frac{\eps^{1/t}}{\eps^{1/t} + (n-1)}$. Hence, one requires $\Omega(n)$ samples to distinguish the two instances with constant probability. \end{proof}

% !TEX root = paper.tex
\section{Conclusion} In this paper we improve the results for moment estimation using proportional sampling and proved optimal sample complexity bounds for moment estimation using proportional sampling for $t\geq 2$. Since hybrid sampling does not provide better guarantees for the moment estimation problem, one possible approach to obtain better bounds might be to explore whether conditional sampling-based approaches give better results. Conditional sampling has been used to design algorithms for the sum estimation problem \cite{ACK2015} but to the best of our knowledge no such algorithm is known for the moment estimation problem.

\subsubsection{Acknowledgements} Anup Bhattacharya is supported by the Science and Engineering Research Board (SERB) via the project (CRG/2023/002119). The authors thank the reviewers for suggestions to improve the manuscript. 

\bibliographystyle{splncs04}
\bibliography{ref}

\begin{thebibliography}{10}
\providecommand{\url}[1]{\texttt{#1}}
\providecommand{\urlprefix}{URL }
\providecommand{\doi}[1]{https://doi.org/#1}

\bibitem{ACK2015}
Acharya, J., Canonne, C.L., Kamath, G.: Adaptive estimation in weighted group
  testing. In: 2015 IEEE International Symposium on Information Theory (ISIT).
  pp. 2116--2120. IEEE (2015)

\bibitem{ABGPRY2018}
Aliakbarpour, M., Biswas, A.S., Gouleakis, T., Peebles, J., Rubinfeld, R.,
  Yodpinyanee, A.: Sublinear-time algorithms for counting star subgraphs via
  edge sampling. Algorithmica  \textbf{80},  668--697 (2018)

\bibitem{AMS1999}
Alon, N., Matias, Y., Szegedy, M.: The space complexity of approximating the
  frequency moments. J. Comput. Syst. Sci.  \textbf{58}(1),  137--147 (1999)

\bibitem{AN2022}
Assadi, S., Nguyen, H.A.: Asymptotically optimal bounds for estimating h-index
  in sublinear time with applications to subgraph counting. arXiv preprint
  arXiv:2209.08114  (2022)

\bibitem{BT2022}
Beretta, L., T{\v{e}}tek, J.: Better sum estimation via weighted sampling. In:
  Proceedings of the 2022 Annual ACM-SIAM Symposium on Discrete Algorithms
  (SODA). pp. 2303--2338. SIAM (2022)

\bibitem{ERS2018}
Eden, T., Ron, D., Seshadhri, C.: Sublinear time estimation of degree
  distribution moments: The degeneracy connection. In: 44th International
  Colloquium on Automata, Languages, and Programming (ICALP 2017). Schloss
  Dagstuhl-Leibniz-Zentrum fuer Informatik (2017)

\bibitem{ERS2019}
Eden, T., Ron, D., Seshadhri, C.: Sublinear time estimation of degree
  distribution moments: The arboricity connection. SIAM Journal on Discrete
  Mathematics  \textbf{33}(4),  2267--2285 (2019)

\bibitem{F2006}
Feige, U.: On sums of independent random variables with unbounded variance and
  estimating the average degree in a graph. SIAM Journal on Computing
  \textbf{35}(4),  964--984 (2006)

\bibitem{G2017}
Goldreich, O.: Introduction to property testing. Cambridge University Press
  (2017)

\bibitem{GR2008}
Goldreich, O., Ron, D.: Approximating average parameters of graphs. Random
  Structures \& Algorithms  \textbf{32}(4),  473--493 (2008)

\bibitem{GRS2011}
Gonen, M., Ron, D., Shavitt, Y.: Counting stars and other small subgraphs in
  sublinear-time. SIAM Journal on Discrete Mathematics  \textbf{25}(3),
  1365--1411 (2011)

\bibitem{JW2023}
Jayaram, R., Woodruff, D.: Towards optimal moment estimation in streaming and
  distributed models. ACM Transactions on Algorithms  \textbf{19} (05 2023).
  \doi{10.1145/3596494}

\bibitem{MU2017}
Mitzenmacher, M., Upfal, E.: Probability and computing: Randomization and
  probabilistic techniques in algorithms and data analysis. Cambridge
  university press (2017)

\bibitem{MPX2007}
Motwani, R., Panigrahy, R., Xu, Y.: Estimating sum by weighted sampling. In:
  International Colloquium on Automata, Languages, and Programming. pp. 53--64.
  Springer (2007)

\bibitem{TT2022}
T{\v{e}}tek, J., Thorup, M.: Edge sampling and graph parameter estimation via
  vertex neighborhood accesses. In: Proceedings of the 54th Annual ACM SIGACT
  Symposium on Theory of Computing. pp. 1116--1129 (2022)

\end{thebibliography}

\end{document}